\def \x{\textbf{x}}
\def \z{\textbf{z}}
\def \o{\textbf{o}}
\def \s{\textbf{s}}
\def \t{\textbf{t}}
\def \0{\textbf{0}}
\def \1{\textbf{1}}
\def \y{\textbf{y}}
\def \B{\textbf{B}}
\def \opt{{\mathsf{opt}}}
\def\1{{\mathbf 1}}
\def\DSMS{\mathsf{DrSMC}}
\def\SMK{\mathsf{SMK}}
\def\DRSC{\mathsf{DRSC}}
\def\algone{FastDrSub}
\def\algtwo{FastDrSub+}
\newcommand{\cp}[1]{{\textcolor{blue}{#1}}}
\begin{document}
	
	\title{Fast Approximation Algorithm for Non-Monotone DR-submodular Maximization under Size Constraint
	}
	
	\titlerunning{Fast Approximation Algorithms for Non-Monotone DR-submodular...}        
	
	\author{Tan D. Tran         \and
		Canh V. Pham (\Letter) 
	}
	
	
	\institute{Tan D. Tran \at
		ORLab, Phenikaa School of Computing, Phenikaa University, Hanoi, 12116, Vietnam\\
		\email{tan.trandinh@phenikaa-uni.edu.vn}           
		\and
		Canh V. Pham (Corresponding author) \at
		ORLab, Phenikaa School of Computing, Phenikaa University, Hanoi, 12116, Vietnam\\
		\email{canh.phamvan@phenikaa-uni.edu.vn}
	}
	
	\date{Received: date / Accepted: date}

	\maketitle
	
	\begin{abstract}
		This work studies the non-monotone DR-submodular Maximization over a ground set of $n$ subject to a size constraint $k$. We propose two approximation algorithms for solving this problem named  \algone \ and  \algtwo.  \algone\ offers an approximation ratio of $0.044$ with query complexity of $O(n \log(k))$. The second one, \algtwo\, improves upon it with a ratio of $1/4-\epsilon$ within query complexity of $(n \log k)$ for an input parameter $\epsilon >0$. Therefore, our proposed algorithms are the first constant-ratio approximation algorithms for the problem with the low complexity of $O(n \log(k))$.
		Additionally, both algorithms are experimentally evaluated and compared against existing state-of-the-art methods, demonstrating their effectiveness in solving the Revenue Maximization problem with DR-submodular  objective function. The experimental results show that our proposed algorithms significantly outperform existing approaches in terms of both query complexity and solution quality.
		
		\keywords{Approximation algorithm \and Submodular \and DR-submodular \and Integer Lattice \and Size Constraint }
	\end{abstract}
	
	\section{Introduction}
	Submodular optimization problems have emerged in a wide range of applications, particularly in machine learning \cite{bach2013learning,Das,prajapat2024submodular}, data mining \cite{ap-recommend,ap-datasum}, and combinatorial optimization \cite{vondrak08_welfare,canh-joco19,tap-17}. Numerous applications related to influence propagation in social networks \cite{tap-17,vic-icml-2019,canh-joco19,canh_optlet,canh_csonet19,canh-csonet18}, budget allocation \cite{soma-nips-15}, recommendation systems \cite{recomend-icml}, data summarization \cite{sc-dis-alg-nips15}, and active set selection \cite{Norouzi_NIPS2016} can all be framed as Submodular Maximization problems. 
	A set function $f: 2^E \mapsto \mathbb{R}_+ $, defined on all subsets of a ground set $E$ size $n$,  is submodular iff it satisfies \textit{the diminishing return property}, i.e., for $X \subseteq Y\subseteq E$, and an element $e\notin Y$, we have:
	\begin{align*}
	f(X \cup \{e\}) - f(X) \geq f(Y \cup \{e\})- f(Y).
	\end{align*}
	The Submodular Maximization problem has been studied under various constraints, including unconstrained \cite{feige2011maximizing}, size constraint~\cite{nemhauser1978analysis}, knapsack constraint~ \cite{sviridenko2004note}, matroid constraint~\cite{calinescu2007maximizing}, etc.
	Nevertheless, the traditional submodular set function falls short in addressing specific real-world scenarios that permit multiple instances of an element from the ground set to be selected. To address this limitation, Soma and Yoshida~ \cite{soma-nips-15} proposed an extension of the function $f$ to the integer lattice $\mathbb{Z}^E_+$, introducing a generalized form of submodularity in this context, termed \textit{diminishing return submodular (DR-submodular)}.  For a vector $\x \in \mathbb{Z}^E_+$, denote by $\x(e)$ the value of $\x$’s coordinate corresponding to an element $e$. For $\x, \y \in \mathbb{Z}_+^E$, we say that $\x\leq \y$ iff $\x(e)\leq \y(e), \forall e \in E$.
	The function $f: \mathbb{Z}^E_+\mapsto \mathbb{R}_+ $ is diminishing return submodular (DR-submodular) on integer lattice if:
	$$
	f(\x+\delta\1_e)-f(\x) \geq f(\y+\delta \1_e)-f(\y)
	$$
	for $\x, \y \in \mathbb{Z}^E_+, \x \leq \y$. In  this work, we consider the DR-submodular Maximization under Size constraint ($\DSMS$) problem, defined as follows:
	\begin{definition}[$\DSMS$ problem]
		Given a  DR-submodular function $f:\mathbb{Z}^E_+\mapsto \mathbb{R}_+$, a bounded lattice $\B$, and a positive integer $k>0$. The problem asks to find $\x \leq \B$ with the total size $\|\x\|_1 \leq k$  such that $f(\x)$ is maximized, i.e,
		\begin{align}
		\max_{\x \in \mathbb{Z}^E_+ } \ \ \ f(\x)  \  \ \mbox{subject to:} \ \  \|\x\|_1 \leq k, \0\leq\x \leq \B
		\end{align}
	\end{definition}
	where $\B=B \cdot\1 $ and $\|\x\|_1=\sum_{e \in E}\x(e)$. Without loss of generality, we set $\B=k \cdot\1 $. 
	The integer lattice can be represented as a multiset of size $O(nk)$, allowing the direct adaptation of existing algorithms for the Submodular Maximization under a size constraint ($\DSMS$) problem. However, the fastest known algorithm for $\DSMS$, proposed by~\cite{buchbinder2014submodular}, requires $\Omega(nk)$ oracle queries to evaluate the function $f$, which is not polynomial in the size of the input. Importantly, since $k$ is an integer and can be encoded using only $O(\log k)$ bits, any algorithm with complexity $O(\mathrm{poly}(k))$ is not necessarily polynomial in the input size. In this setting, we assume access to a value oracle that returns $f(\x)$ when queried with a multiset $\x$.
	
	Recently, Ene and Nguyen~\cite{ene2016reduction} proposed a reduction technique that transforms the problem of optimizing DR-submodular functions into a standard submodular maximization problem, thereby enabling the application of well-established results in submodular optimization. Specifically, the $\DSMS$ problem, after applying the reduction~\cite{ene2016reduction}, can be transformed into the following submodular maximization problem under a knapsack constraint:
	\begin{align}
	\max_{S \subseteq E'} \quad g(S) \quad \text{subject to:} \quad c(S) \leq k,
	\end{align}
	where $E' = \bigcup_{e \in E} \{(e,j) : j \in [t_e]\}$ is the new ground set obtained by decomposing each integer variable $\x(e)$ into a sum of smaller weights $a_{e,j} \leq \varepsilon k$. Each $a_{e,j}$ represents a small positive integer weight used to decompose the integer variable $\x(e)$ into a sum of binary variables, ensuring that any value in the range $[0, B_e]$ can be expressed as a subset sum of $\{a_{e,j}\}_{j=1}^{t_e}$. The submodular function $g(S)$ is defined as $g(S) := f(\x)$, where $\x(e) = \sum_{j : (e,j) \in S} a_{e,j}$ for all $e \in E$, and the cost function is $c(S) := \sum_{(e,j) \in S} a_{e,j}$. 
	As a result of the reduction, the size of the new search space becomes $O(n \log k)$ instead of $O(n)$ as in the original DR-submodular problem.   One can apply the best-known approximation ratio, together with efficient practical algorithms for the submodular knapsack problem, as established in~\cite{pham-ijcai23,Han2021_knap}, to obtain a ratio of
	$1/4-\epsilon$.
	However, a weakness	randomized algorithms is that they only keep the ratio in expectation and the experimental results for real-world dataset may be unstable.
	In contrast, recent studies have demonstrated that deterministic algorithms often yield superior empirical performance, primarily due to their stability across diverse datasets~\cite{kuhnle2021bquick,kdd-ChenK23,han-neurips20,li-linear-knap-nip22}. Moreover, as data scales to massive sizes, it becomes increasingly important to design approximation algorithms for $\DRSC$ that minimize oracle queries to ensure computational efficiency.
	
	These observations raise two important research questions: \textbf{(1)} Can we design new algorithms that achieve a comparable approximation ratio while improving the oracle query complexity, thereby enhancing the practical efficiency of the reduction framework? \textbf{(2)} Is it possible to develop a \textit{deterministic} algorithm for the DR-submodular maximization problem under a size constraint with an approximation guarantee comparable to the current randomized methods?
	Addressing these questions would not only improve the computational performance of DR-submodular optimization on large-scale instances but also advance the theoretical understanding of submodular maximization over the integer lattice.
	\paragraph{Our Contributions and Techniques.}
	In this work, we try to address the above two questions. Our main contributions are as follows:
	\begin{itemize}
		\item We propose two algorithms: \algone \ achieves an approximation ratio of $0.044$ with a query complexity of $O(n \log(k))$, and  \algtwo\ achieves an approximation ratio of $\frac{1}{4} - \epsilon$ with a query complexity of $O\left(\frac{n}{\epsilon}\log\left( \frac{k}{\epsilon}\right)\right)$ for the $\DSMS$ problem.
      Although the best-known theoretical approximation ratio is $0.401$ \cite{buchbinder2024constrained}, the corresponding algorithm has polynomial complexity and is therefore impractical. Compared with the fastest near-linear algorithms \cite{pham-ijcai23,Han2021_knap}, which rely on randomness, our approach achieves a similar complexity but remains \emph{deterministic}. To the best of our knowledge, this is the first deterministic algorithm that achieves the strongest known approximation ratio in the near-linear time setting (for details, see Table~\ref{tab:compare}).
        
		\item We perform comprehensive experiments on Revenue Maximization benchmarks, demonstrating that our algorithms attain solution quality comparable to the current state of the art while requiring fewer oracle queries.
	\end{itemize}
    \begin{table*}[h]
		\centering
			\caption{
			We compare several algorithms for the maximization of a non-monotone DR-submodular function under a size constraint. Each algorithm is evaluated by its approximation guarantee and the number of oracle queries required. Let $\epsilon>0$ denote the accuracy parameter, $k$ the size budget, and $\|\mathbf{b}\|_\infty = k$ the maximum coordinate bound. In particular, the oracle complexities of RLA and SMKRANACC have been rederived using the reduction framework of Ene and Nguyen~\cite{ene2016reduction}, which increases the ground‐set size to $n(2\log k + 1)$. Note that the rederived complexities are stated with respect to this enlarged ground set, but they do not explicitly account for the additional computational overhead introduced by the reduction itself.
		}
		\footnotesize{
			\begin{tabular}{cccc}
				\hline
				\textbf{Reference} & \textbf{Approx. Factor} & \textbf{Query complexity}&
				\textbf{Type}\\
				\hline
				RLA \cite{pham-ijcai23}+Reduction\cite{ene2016reduction}       & $\frac{1}{4} -\epsilon$                                   & $O\left(\frac{n}{\epsilon} \log (k) \log ( \frac{1}{\epsilon}) \right)$& Randomized                                                          \\				
				SMKRANACC\cite{Han2021_knap}+Reduction\cite{ene2016reduction}              & $\frac{1}{4}  -\epsilon$                                 & $O\left( \frac{n}{\epsilon} \log (k) \log ( \frac{k}{\epsilon}) \right)$& Randomized                              \\					
				\hline
				\algone \  (this paper)                    & $\displaystyle 0.044$ & \textbf{$O(n\log k)$}& Deterministic                                                       \\
				\algtwo  \ (this paper)                    & $\tfrac14-\epsilon$                          & $O(\frac{n}{\epsilon}\log(\frac{1}{ \epsilon})\log(k))$&  Deterministic         \\
				\hline
		\end{tabular}}
		\label{tab:compare}
	\end{table*}
	To attain constant-factor approximation guarantees within 
	$O(n\log k)$ query complexity, we introduce a novel combinatorial algorithmic framework comprising two key phases:
	(1) In the first phase, we partition the solution space into two subspaces—one consisting of elements whose cardinality does not exceed $\alpha k$ for $\alpha \in (0, 1)$, and the other containing the remaining elements. We then compute near-optimal solutions independently within each subspace and strategically merge them to obtain a unified solution that achieves a constant approximation ratio of $0.044$ ( \algone \  Algorithm ).
	(2) In the second phase, we enhance the approximation ratio to 
	nearly $1/4$
	via the \algtwo \ algorithm. This improvement leverages the output of \algone \ and employs a carefully designed greedy thresholding technique to sequentially construct two candidate solutions. The construction is guided by the structural properties of two auxiliary vectors, 
	$\x$
	and 
	$\y$, which satisfy the condition 
	$\min \{\x(e),\y(e)\}=0$ for all 
	$e \in V$, as formalized in Lemma~\ref{lem:basic}.
	\paragraph{Organization}
	This paper is organized as follows: The Related Works section~\ref{sec:relatedwork}  presents previous research and existing methods. The Preliminaries section~\ref{sec:pre}  provides the basic concepts and theoretical foundations necessary for understanding the proposed algorithms. The  Proposed Algorithm section~\ref{sec:algs}  describes the details of the proposed algorithm, including the steps and underlying theory. The Experimental Evaluation section ~\ref{sec:expr}  presents the experimental results and analysis of the algorithm's performance. Finally, the Conclusion section ~\ref{sec:conclusion} summarizes the key findings and outlines directions for future research.
	
	\section{Related Works}
	\label{sec:relatedwork}
	In this section, we review the literature on optimization algorithms for $\DRSC$.
	\paragraph{Monotone DR-submodular functions.}  
	Soma and Yoshida~\cite{soma-nips-15} were the first to extend the notion of submodular functions to the integer lattice by introducing the concept of DR-submodularity and proposing a bicriteria approximation algorithm that simultaneously accounts for both the objective and cost functions. Subsequently, in a follow-up work~\cite{soma2018maximizing}, the authors developed two deterministic algorithms that achieve a \(1 - \frac{1}{e} - \epsilon\) approximation for maximizing monotone DR-submodular and general lattice submodular functions, with time complexities of \(O\left(\frac{n}{\epsilon}\log k \log \frac{k}{\epsilon}\right)\) and \(O\left(\frac{n}{\epsilon^2} \log k \log \frac{k}{\epsilon} \log \tau\right)\), respectively, where \(k\) denotes the size budget and the maximum coordinate bound, and \(\tau\) denotes both the ratio between the maximum function value and the minimum positive marginal gain.
	Lai et al. (2019)~\cite{lai2019monotone} introduced a randomized algorithm, though it suffers from instability in the number of oracle calls. More recently, Schiabel et al. (2025)~\cite{schiabel2021randomized} proposed the Stochastic Greedy Lattice (SGL) algorithm, which extends the stochastic greedy sampling strategy to monotone DR-submodular functions over integer lattices. SGL achieves an approximation guarantee of $\left(1 - \frac{1}{e} - \hat{t}\epsilon\right)$ with probability greater than $\frac{1}{2}$, where $\hat{t}$ is a small constant dependent on the number of iterations.
	
	\paragraph{Non-monotone DR-submodular functions.}  
	Designing algorithms for non-monotone DR-submodular functions is significantly more challenging than for the monotone case due to the potential decrease in function values. Gottschalk and Peis (2015)~\cite{gottschalk2015submodular} proposed a Double Greedy algorithm for maximizing submodular functions over bounded integer lattices, achieving a $\frac{1}{3}$ approximation ratio, albeit with pseudopolynomial complexity. Ene et al. (2020)~\cite{ene2020parallel} developed a parallel algorithm for non-monotone DR-submodular maximization but restricted it to the continuous domain. Li et al. (2023)~\cite{li2023dr} further improved algorithmic efficiency using an adaptive step size approach.
	
	As mentioned in the introduction, by applying the reduction technique proposed by Ene and Nguyen\cite{ene2016reduction}, the $\DSMS$ can be transformed into a standard submodular maximization problem under a knapsack constraint ($\SMK$). In the following, we review the line of research focused on the $\SMK$ problem. The first work to address the $\SMK$ problem in the non-monotone case was proposed by \cite{Lee_nonmono_matroid_knap}, achieving an approximation ratio of $5+\epsilon$ with polynomial query complexity. Subsequently, many follow-up studies have aimed to improve algorithmic efficiency~\cite{BuchbinderF19-bestfactor,Gupta_nonmono_constrained_submax,fast_icml,nearly-liner-der-2018,best-dla,pham-ijcai23,Han2021_knap}. Among these, the method in \cite{BuchbinderF19-bestfactor} achieved the best approximation factor of $2.6$, though at the cost of high oracle complexity. Conversely, the algorithm in \cite{pham-ijcai23} attained the lowest query complexity—linear in the input size—while achieving an approximation ratio of $4+\epsilon$.
	
	To the best of our knowledge, no existing work has directly addressed the problem of maximizing non-monotone DR-submodular functions under size constraints. In this paper, we propose a novel approach to address this problem.

	\section{Preliminaries}
	\label{sec:pre}
	For a positive integer $k \in \mathbb{N}$, $[k]$ denotes the set $\{1, \ldots, k\}$.
	Given a ground set  $E = \{e_1,\ldots, e_n\}$, we denote  $\x(e)$ as the value of $\x$’s coordinate
	corresponding to an element $e$, define the $e$-th unit vector $\1_e$ with $\1_e(t)=1$ if $t=e$ and $\1_e(t)=0$ if $t\neq e$. For $\x, \y \in \mathbb{Z}_+^E$, we say that $\x\leq \y$ iff $\x(e)\leq \y(e), \forall e \in E$.  For $\x=(x_1, x_2, \ldots, x_n)$ and $\y=(y_1, y_2, \ldots, y_n)$, we define
	\begin{align}
	\x \wedge \y&= (\min\{x_1, y_1\}, \min\{x_2, y_2\}, \ldots, \min\{x_n, y_n\})
	\\
	\x \vee \y&= (\max\{x_1, y_1\}, \max\{x_2, y_2\}, \ldots, \max\{x_n, y_n\}).
	\end{align}
	For function $f: \mathbb{Z}_+^E \mapsto \mathbb{R}_+$, we define $f(\x  |\y) =f(\x + \y)-f(\y)$. For $\x \in \mathbb{Z}^E_+$,
	we denote by $\{\x\}$ the set of elements  appears	in $\x$ times and with a subset, i.e, $\{\x\}=\{e \in E, \x(e)\geq 1 \}$ and the size of $\x$ as
	$\|\x\|_1=\sum_{e\in \{\x\}}\x(e)$. 
	A function $f: \mathbb{Z}_+^E \mapsto \mathbb{R}$ is \textbf{monotone}
	if $f(\x)\leq f(\y$) for all $\x,\y \in \mathbb{Z}_+^E$ with $\x \leq  \y $. The function $f$ is  \textbf{lattice submodular} iff
	\begin{align}
	f(\x)+ f(\y) \geq  f(\x \vee \y)+ f(\x \wedge \y)
	\end{align}
	for any $\x, \y \in \mathbb{Z}_+^E$.
	The function $f$ is said to be \textbf{DR-submodular} iff
	\begin{align}
	f(\x+ \1_e)-f(\x) \geq f(\y+\1_e)-f(\y)
	\end{align}
	for all $\x \leq \y$, $e \in E$. 
	If $f$ is DR-submodular then one  implies $f$ is lattice submodular \cite{soma-nips-15}.
	We assume that we can access $f$ through an oracle, i.e., for any vector $\x \in \mathbb{Z}_+^E$, it returns the value of $f(\x)$.
	
	The following basic lemmas are established as foundational tools for our theoretical analysis.
	\begin{lemma} For any $\s \in \mathbb{Z}^E_+$ and two vectors $\x, \y \in  \mathbb{Z}^E_+$ such that $\x \wedge \y=\0$ we have
		\begin{align}
		f(\s)\leq f(\s \vee\x ) +  f(\s \vee\y ).
		\end{align}
		\label{lem:basic}
	\end{lemma}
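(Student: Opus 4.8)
The plan is to reduce the claim to the lattice submodularity inequality applied to the single pair of vectors $\s \vee \x$ and $\s \vee \y$. Recall from the excerpt that DR-submodularity of $f$ implies lattice submodularity, so for any $\a, \b \in \mathbb{Z}_+^E$ we may use $f(\a) + f(\b) \geq f(\a \vee \b) + f(\a \wedge \b)$. I would instantiate this with $\a = \s \vee \x$ and $\b = \s \vee \y$ and compute both the join and the meet of $\a$ and $\b$ in closed form.

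The join is immediate by associativity and idempotence of $\vee$: $\a \vee \b = (\s \vee \x) \vee (\s \vee \y) = \s \vee \x \vee \y$. The crux of the argument is the identity $\a \wedge \b = \s$, and this is precisely where the hypothesis $\x \wedge \y = \0$ is used. Working one coordinate at a time, fix $e \in E$ and observe that $(\a \wedge \b)(e) = \min\{\max\{\s(e), \x(e)\},\, \max\{\s(e), \y(e)\}\}$. Because $\x \wedge \y = \0$ forces $\min\{\x(e), \y(e)\} = 0$, at least one of $\x(e)$ or $\y(e)$ vanishes; in either case one of the two inner maxima collapses to $\s(e)$, and since $\s(e)$ is a lower bound for the other maximum, the outer minimum equals $\s(e)$. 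Hence $(\a \wedge \b)(e) = \s(e)$ for every $e$, so $\a \wedge \b = \s$.

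Substituting these two evaluations into the lattice submodularity inequality gives $f(\s \vee \x) + f(\s \vee \y) \geq f(\s \vee \x \vee \y) + f(\s)$. Finally, since $f$ maps into $\mathbb{R}_+$, the term $f(\s \vee \x \vee \y)$ is non-negative and can simply be discarded from the right-hand side, yielding the desired bound $f(\s) \leq f(\s \vee \x) + f(\s \vee \y)$.

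I expect the only nontrivial step to be the verification that the meet $(\s \vee \x) \wedge (\s \vee \y)$ collapses to $\s$, which is the sole place the disjoint-support condition $\x \wedge \y = \0$ is needed. Everything else is a direct substitution into the submodularity inequality followed by dropping a non-negative term via $f \geq 0$.
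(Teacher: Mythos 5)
Your proposal is correct and follows essentially the same route as the paper's own proof: apply lattice submodularity to $\s\vee\x$ and $\s\vee\y$, use $\x\wedge\y=\0$ to show the meet collapses to $\s$, and discard the non-negative term $f(\s\vee\x\vee\y)$. Your coordinate-wise verification of the meet identity is a nice explicit justification of a step the paper states without detail.
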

	\begin{proof}
		Since	$f$ is a non-negative and lattice submodular function, we have
		\begin{align}
		f(\s \vee\x ) +  f(\s \vee \y ) & \geq f(\s \vee \x \vee\y) +  f((\s \vee \x)\wedge (\s \vee \y) )  \label{lem1:ine1}
		\\
		& =f(\s \vee \x \vee\y) +  f(\s ) \label{lem1:ine2}
		\\
		& \geq f(\s) \label{lem1:ine3}
		\end{align}
		where the inequality~\eqref{lem1:ine1} is due to the lattice submodular, the inequality~\eqref{lem1:ine2} is due to $\x\wedge \y=\0$, the inequality~\eqref{lem1:ine3} is due to the non-negative of $f$. The proof is competed.
	\end{proof}
	\begin{lemma} For any $\x \in \mathbb{Z}^E_+$ and an integer number $t \geq 0$, then
		$
		f(t \1_e|\x) \leq t f(\1_e|\x)
		$.
		\label{lem:basic1}
	\end{lemma}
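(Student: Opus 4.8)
The plan is to prove this by telescoping the marginal gain across the $t$ single-coordinate increments and then bounding each individual increment by the first one via DR-submodularity. Concretely, I would first unwind the definition $f(t\1_e|\x) = f(\x + t\1_e) - f(\x)$ and write it as a telescoping sum
\begin{align}
f(t\1_e|\x) = \sum_{i=0}^{t-1}\bigl(f(\x + (i+1)\1_e) - f(\x + i\1_e)\bigr) = \sum_{i=0}^{t-1} f\bigl(\1_e \mid \x + i\1_e\bigr).
\end{align}
The case $t=0$ is immediate since both sides are $0$, so I would treat $t \geq 1$ from here on.

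Next I would invoke the DR-submodularity property on each summand. For every $i \in \{0, 1, \ldots, t-1\}$ we have the coordinate-wise inequality $\x \leq \x + i\1_e$, so the defining DR-submodular inequality $f(\x + \1_e) - f(\x) \geq f(\y + \1_e) - f(\y)$ applied with $\y = \x + i\1_e$ yields
\begin{align}
f\bigl(\1_e \mid \x + i\1_e\bigr) \leq f\bigl(\1_e \mid \x\bigr).
\end{align}
That is, the marginal gain of adding one more unit in coordinate $e$ is largest at the base point $\x$ and cannot increase as we stack further copies of $\1_e$.

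Finally I would substitute this bound into the telescoping sum: since each of the $t$ terms is at most $f(\1_e|\x)$, summing gives
\begin{align}
f(t\1_e|\x) = \sum_{i=0}^{t-1} f\bigl(\1_e \mid \x + i\1_e\bigr) \leq \sum_{i=0}^{t-1} f(\1_e|\x) = t\, f(\1_e|\x),
\end{align}
which is exactly the claim. I do not expect a genuine obstacle here; the argument is routine once the telescoping is set up. The only point requiring slight care is ensuring that the chain of base points $\x \leq \x + \1_e \leq \cdots \leq \x + (t-1)\1_e$ is used correctly so that DR-submodularity — which is stated for a unit increment $\1_e$ — applies verbatim to each term rather than needing the weaker lattice-submodular inequality.
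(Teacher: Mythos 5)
Your proof is correct and follows essentially the same route as the paper: telescope $f(t\1_e\mid\x)$ into $t$ unit-increment marginals and bound each by $f(\1_e\mid\x)$ using DR-submodularity. Your write-up is slightly more explicit than the paper's (handling $t=0$ and naming the comparison points), but the argument is identical.
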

	\begin{proof}
		By the Dr-submodularity of $f$, we have
		\begin{align*}
		f(t\1_e \mid \x) &= f(\x+t\1_e) - f(\x)\\
		&=f(\1_e\mid \x+(t-1)\1_e)+f(\1_e\mid \x+(t-2)\1_e)+\ldots+f(\1_e\mid\x)\\
		&\leq tf(\1_e\mid \x) \label{lem:basic2}
		\end{align*}
		which completes the proof.
	\end{proof}
	\section{The Proposed Algorithms}
	\label{sec:algs}
	In this section, we present two algorithms: \textit{Fast Approximation   (\algone)}  and \textit{Fast  Approximation Plus  (\algtwo)}  for $\DSMS$ problem.  \algone\  retains an approximation ratio of $0.044$ (Theorem~\ref{theo:algone}) and takes the query complexity of $O(n \log k)$.  \algtwo \  built upon  \algone, enhances the optimization process by employing the constant number guess of optimal solutions with greedy threshold \cite{badanidiyuru2014fast} and therefore improve the approximation ratio to nearly  \cp{$\frac{1}{4}$} (Theorem~\ref{theo:algtwo}) without increasing the query complexity of $O(n \log k)$.
	
	\subsection{Fast Approximation Algorithm (\algone)}	
	
	The \algone\ algorithm takes as input a DR-submodular function $f$, a finite ground set $E$ of size $n$, a size budget $k \in \mathbb{Z}_{>0}$, and a threshold parameter $\alpha \in (0,1)$. Its goal is to output a solution vector $\z \in \mathbb{Z}_{\ge0}^{E}$ satisfying $\|\mathbf{z}\|_{1}\le k$, which approximately maximizes $f(\z)$. The core idea of the algorithm is to simultaneously construct two disjoint solution vectors, $\x$ and $\y$, using a dynamic-threshold rule, thus facilitating a clear theoretical analysis based on Lemma~\ref{lem:basic}.
	
	Specifically, it first performs a binary search over $d \in (\alpha k, k]$ to identify the pair $(d_{\max}, e_{\max})$ that maximizes $f(d\mathbf{1}_{e})$, retaining this singleton choice as an independent candidate (Line~\ref{d_max}). Next, starting from $\x = \y = \mathbf{0}$, the algorithm iterates through each element $e \in E$. For each element, it identifies the largest integer $d \leq \alpha k$ whose marginal gain satisfies $f(\mathbf{1}_e \mid \x + (d-1)\mathbf{1}_e) \geq \frac{f(\x)}{k}$ (and analogously for $\y$), then updates only the vector exhibiting the larger marginal gain. 
	
	Consequently, the two vectors evolve along distinct trajectories, guided by the same value-dependent threshold. After processing all elements, the algorithm trims the most recently added units from each vector, obtaining vectors $\x'$ and $\y'$ with $\|\x'\|_1 \le k$ and $\|\y'\|_1 \le k$. Finally, the best solution among $\x'$, $\y'$, and $d_{\max}\mathbf{1}_{e_{\max}}$ is returned. The overall procedure requires only $O(n\log k)$ oracle calls and guarantees a proven constant-factor approximation ratio.
	\begin{algorithm}[hpt]
		\label{algone}
		\KwIn{ $f: \mathbb{Z}^E_+ \mapsto \mathbb{R}_+$,$E$, $k$, $\alpha$ } 
		$(d_{max}, e_{max}) \leftarrow \arg \max_{e \in E, \ d \in \mathbb{Z}, \ \alpha k < d \leq k} \ f(d\1_{e}) $ by using the binary search \label{d_max}
		
		$\x \leftarrow \0$, $\y \leftarrow \0$ \label{setxy}
		
		\ForEach{$e \in E$}
		{\label{startloop}
			
			$d_{(\x,e)} \leftarrow  \max\{d \in \mathbb{Z} \mid 0 < d \leq \alpha k, \ f(\1_e \mid \x+(d-1)\1_e) \geq \frac{f(\x)}{k} \}$ by using the binary search \label{finddx}
			
			$d_{(\y,e)} \leftarrow  \max\{d \in \mathbb{Z} \mid 0 < d \leq \alpha k, \ f(\1_e \mid \y+(d-1)\1_e) \geq \frac{f(\y)}{k} \}$ by using the binary search \label{finddy}
			
			\eIf{$f(d_{(\x,e)} \1_e\mid \x) \geq f(d_{(\y,e)} \1_e\mid \y)$}{
				\label{comparexy}
				$\x \leftarrow \x + d_{(\x,e)}\1_e$\label{updatex} 
			}{
				$\y \leftarrow \y + d_{(\y,e)}\1_e$\label{updatey}
			}            
		}\label{endloop}
		Define $\{e_1, e_2, \ldots, e_t\}$ as the set of last $t$ elements added into $\x$, i.e, $\x_{t}=\sum_{i=1}^t \x(e_t) \1_{e_t}$
		\\
		$\x' \leftarrow \arg\max_{\x_t: |\x_t|\leq k}t$ \label{selectx}
		\\
		Define $\{e'_1, e'_2, \ldots, e'_t\}$ as the set of last $t$ elements added into $\y$, i.e, $\y_{t}=\sum_{i=1}^t \y(e'_t) \1_{e'_t}$
		\\
		$\y' \leftarrow \arg\max_{\y_t: |\y_t|\leq k}t$ \label{selecty}
		\\
		$\z \leftarrow \arg \max_{\z' \in \{\x', \y', d_{max}\1_{e_{max}}\}f(\z')}$ 
		\\
		\Return $\z$ \label{return}
		\caption{\algone \  Algorithm}
		\label{alg1}
	\end{algorithm}
	
	To analyze the theoretical guarantee of the Algorithm~\algone, we first provide following useful notations:
	\begin{itemize}    
		\item $\o$ is the optimal vector solution.
		\item  $\o_1$ is a sub-vector of $\o$ such that $
		\o_1(e) = 
		\begin{cases} 
		\o(e), & \text{if } \o(e) \leq \alpha k , \\
		0, & \text{otherwise}.
		\end{cases}
		$
		\item  $\o_2$ is a sub-vector of $\o$ such that $
		\o_2(e) = 
		\begin{cases} 
		\o(e), & \text{if } \o(e) > \alpha k , \\
		0, & \text{otherwise}.
		\end{cases}
		$
		\item  $E_\x = \{e \in E : \x(e)>0,\o(e)=0\}$, $E_{\x \wedge \o} = \{e \in E : \x(e)\geq  \o(e)>0\}$. 
		\item  $E_\y = \{e \in E : \y(e)>0,\o(e)=0\}$, $E_{\y\wedge \o} = \{e \in E : \y(e)\geq  \o(e)>0\}$.
		\item $E_{\o} = \{e \in E : \o(e)>0, \o(e)>\x(e),\o(e)>\y(e)\}$.
		\item  For $e \in E_\x\cup E_\y$, denote $t_e=\o_1(e) - \z(e), \z\in \{\x, \y\}$ (Note that $\x\wedge \y=\0$). 
		
		\item  $\x_e$ be $\x$  before $e$ added to $\x$ immediately at line \ref{updatex} for all $e\in \{\x\}$.
		\item $\y_e$ be $\y$  before $e$ added to $\y$ immediately at line \ref{updatey} for all $e\in \{\y\}$.
	\end{itemize}
We first establish the connection between $\x, \y$ and $\o_1$ as follows: 
	\begin{lemma}
		We have
		$f(\o_1 \vee \x)+f(\o_1 \vee \y) \leq 4(f(\x)+f(\y))$.
		\label{lem:xy}
	\end{lemma}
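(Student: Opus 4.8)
The plan is to bound $f(\o_1\vee\x)$ and $f(\o_1\vee\y)$ separately and then add the two estimates. In fact I would prove the stronger pair of inequalities $f(\o_1\vee\x)\le 2f(\x)+f(\y)$ and, symmetrically, $f(\o_1\vee\y)\le 2f(\y)+f(\x)$, from which the claimed bound follows at once since $f\ge 0$ (their sum is $3(f(\x)+f(\y))\le 4(f(\x)+f(\y))$). To estimate $f(\o_1\vee\x)-f(\x)$ I would add the ``missing'' coordinates of $\o_1$ on top of $\x$ one element at a time and telescope, writing the gain as a sum of increments $f\big((\o_1(e)-\x(e))\1_e\mid \x^{(i-1)}\big)$ over the elements $e$ with $\o_1(e)>\x(e)$, where each partial vector satisfies $\x^{(i-1)}\ge\x$. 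DR-submodularity lets me replace the base $\x^{(i-1)}$ by $\x$ in every term, and Lemma~\ref{lem:basic1} lets me pass to single-unit marginals $f(\1_e\mid\x)$ wherever that is convenient.

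The crux is to classify the coordinates $e$ with $\o_1(e)>\x(e)$ according to where the algorithm placed $e$. For every such $e$ that is \emph{not} in the support $\{\y\}$ (so $e\in\{\x\}$ or $e$ was added to neither vector), the dynamic threshold rule forces $f(\1_e\mid\x)<f(\x)/k$: since $\o_1(e)\le\alpha k$ and $\o_1(e)>\x(e)=d_{(\x,e)}$, the unit at index $d_{(\x,e)}+1\le\alpha k$ failed the test $f(\1_e\mid\x_e+(d-1)\1_e)\ge f(\x_e)/k$, and DR-submodularity together with $f(\x_e)\le f(\x)$ (the values of $\x$ are non-decreasing along the run, as every added block has nonnegative gain) propagates this bound to the final $\x$. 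Because $\sum_e(\o_1(e)-\x(e))\le\|\o_1\|_1\le k$, these elements contribute at most $f(\x)$ in total. Working with $\o_1$ rather than $\o$ is essential here: the bound $\o_1(e)\le\alpha k$ keeps every relevant unit index inside the range where the threshold rule was actually evaluated.

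The remaining, ``problematic'' elements are those in $\{\y\}$ with $\o_1(e)>\x(e)=0$; here the first unit's gain on $\x$ can be large, so the threshold argument is useless and I must instead exploit the greedy comparison $f(d_{(\y,e)}\1_e\mid\y_e)\ge f(d_{(\x,e)}\1_e\mid\x_e)$ that routed $e$ to $\y$ at Line~\ref{comparexy}. I would split on whether $\o_1(e)\le d_{(\x,e)}$ or $\o_1(e)>d_{(\x,e)}$. In the first case $f(\o_1(e)\1_e\mid\x)\le f(d_{(\x,e)}\1_e\mid\x_e)\le f(\y(e)\1_e\mid\y_e)$; in the second case I peel off the first $d_{(\x,e)}$ units (still bounded by $f(\y(e)\1_e\mid\y_e)$) and control the surplus units, all of which sit above the threshold index and hence each has gain below $f(\x_e)/k\le f(\x)/k$, by $\o_1(e)\cdot f(\x)/k$. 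Summing the first parts telescopes over $\{\y\}$ to at most $f(\y)$, and summing the surplus parts is at most $f(\x)$ because $\sum_e\o_1(e)\le k$. Combining with the non-problematic contribution gives $f(\o_1\vee\x)-f(\x)\le f(\x)+f(\y)$, as desired, and the same argument with $\x$ and $\y$ interchanged handles $f(\o_1\vee\y)$.

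The main obstacle is exactly this problematic class. The naive route of applying Lemma~\ref{lem:basic1} to pull the full multiplicity $\o_1(e)\le\alpha k$ out of $f(\o_1(e)\1_e\mid\x)$ and then bounding $f(\1_e\mid\x)$ loses an unacceptable factor of $k$, since it multiplies a per-unit estimate by a multiplicity that can be as large as $\alpha k$. Avoiding this blow-up is precisely what the two-case split at the threshold index $d_{(\x,e)}$ achieves: it separates the cheap ``above-threshold'' units (charged to $f(\x)/k$) from the ``at-or-below-threshold'' block, which is charged to $\y$'s own marginal gain through the selection inequality. Verifying that this at-or-below block is genuinely dominated by $f(\y(e)\1_e\mid\y_e)$, so that it telescopes cleanly over the support of $\y$, is the one step that uses the comparison of Line~\ref{comparexy} in an essential way.
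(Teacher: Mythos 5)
Your proposal is correct and follows essentially the same strategy as the paper's proof: decompose $f(\o_1\vee\x)-f(\x)$ coordinate-by-coordinate, charge the units that the threshold rule rejected at rate $f(\x)/k$ each (using $\|\o_1\|_1\le k$ and $\o_1(e)\le\alpha k$ to keep the indices in the tested range), and charge the blocks routed to $\y$ against $f(\y)$ via the comparison at Line~\ref{comparexy}, then symmetrize. Your case split (on $\o_1(e)$ versus $d_{(\x,e)}$ rather than $d_{(\x,e)}$ versus $d_{(\y,e)}$) is slightly more careful and yields the marginally tighter bound $f(\o_1\vee\x)\le 2f(\x)+f(\y)$, but the ideas are the same.
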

	\begin{proof}
   The main idea of the proof is to analyze the marginal gains when adding elements from $\o_1$. For $e \in E_{\o}$, the contribution can be bounded by a fraction of $\frac{f(\x)}{k}$, while for $e \in E_{\y \wedge \o}$, the gain obtained when adding to $\x$ is always upper-bounded by the gain when adding to $\y$ plus a small fraction proportional to $f(\x)$.
        \\
		Consider elements  $e \in E_\o$,
		due to the selection rule of $d_{(\x,e)}$ (lines \ref{finddx},\ref{finddy}) in each iteration,  we have $f(\1_e \mid \x_e+\x(e)\1_e)< \frac{f(\x_e)}{k}$.  Therefore 
		\begin{align}
		f(t_e\1_e\mid\x_e+\x(e)\1_e) &\leq t_ef(\1_e|\x_e+\x(e)\1_e) \ \ \ \mbox{(By Lemma~\ref{lem:basic1}})
		\\
		& \leq \frac{t_ef(\x_e)}{k}\leq\frac{\o_1(e)f(\x_e)}{k}.\label{proof_theo1_7}
		\end{align}
		Each element $e \in E_{\y\wedge \o}$ satisfied the selection rule in Lines 6-10. We consider two cases. If $d_{(\x, e)}\geq d_{(\y, e)}$, we have
		\begin{align}
		f(d_{(\y, e)} \1_e|\x) & \leq f(d_{(\y, e)} \1_e|\x) + \sum_{i=1}^{d_{(\x, e)}-d_{(\y, e)}} f(\1_e|\x+d_{(\y, e)} \1_e+ (i-1)\1_e)  \label{ine:a}
		\\
		&  = f(d_{(\x, e)} \1_e|\x)
		\\
		& \leq f(d_{(\y, e)} \1_e|\y)  \label{ine:b}
		\end{align}
		where the inequality~\eqref{ine:a} is due to the selection rule of $d_{(\x, e)}$, i.e,  $f(\1_e \mid \x+(d-1)\1_e) \geq \frac{f(\x)}{k} \geq 0$; the inequality~\eqref{ine:b} is due to the selection rule of $e$ and $d_{(\y, e)}$.
		\\
		If $d_{(\x, e)}<d_{(\y, e)}$, let $l_e=d_{(\y, e)}-d_{(\x, e)}$ we have
		\begin{align}
		f(d_{(\y, e)} \1_e|\x) &= f(d_{(\x, e)} \1_e|\x) + f(l_e \1_e|\x+ d_{(\x, e)} \1_e) 
		\\
		& \leq  f(d_{(\y, e)} \1_e|\y) + \frac{\y(e)f(\x)}{k} .
		\end{align}
		Put them together, we have
		\begin{align}
		f(\o_1 \vee \x)-f(\x)&\leq \sum_{e \in E_{\o_1}}f(t_e\1_{e}|\x)+\sum_{e \in E_{\y\wedge \o_1}}f(d_{(\y, e)}\1_{e}|\x)
		\\
		&\leq  \sum_{e \in E_{\o_1}}\frac{\o_1(e)f(\x_e)}{k}+\sum_{e \in E_{\y\wedge \o_1}}\Big(f(d_{(\y, e)} \1_e|\y) + \frac{\y(e)f(\x)}{k}\Big)
		\\
		&\leq  f(\x)+\sum_{e \in E_{\y\wedge \o_1}}\Big(f(d_{(\y, e)} \1_e|\y_e) + \frac{\y(e)f(\x)}{k}\Big)
		\\
		& \leq 2f(\x)+ f(\y). \label{x-ine}
		\end{align}
		By the similarity augments, we also have the same result as:
		\begin{align}
		f(\o_1 \vee \y)-f(\y)\leq   f(\x)+ 2f(\y).
		\end{align}
		Combining this with \eqref{x-ine} we have
		\begin{align}
		f(\o_1\vee\x)+f(\o_1\vee\y)\leq 4(f(\x)+f(\y))
		\end{align}
		which completes the proof.
	\end{proof}
	We state theoretical guarantees of \algone \ in Theorem~\ref{theo:algone}.
	\begin{theorem} For a constant $\alpha \in (0,1)$, Algorithm~\ref{alg1} provides an approximation ratio of $\frac{1}{8\frac{(2-\alpha)}{1-\alpha}+\frac{1}{\alpha}}$ and has a query complexity of $O(n \log(k))$. The algorithm achieves the ratio of $\frac{1}{17+4\sqrt{2}}\approx0.044$ when $\alpha=\frac{2\sqrt{2}-1}{7}$.
		\label{theo:algone}
	\end{theorem}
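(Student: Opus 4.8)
The plan is to upper bound the optimum $f(\o)$ by a weighted combination of $f(\x)$, $f(\y)$ and the singleton value $M := f(d_{\max}\1_{e_{\max}})$, and then convert this into a lower bound on the returned $f(\z)$. First I would split the optimum according to the threshold $\alpha k$: since $\o_1 \wedge \o_2 = \0$ and $f$ is non-negative lattice submodular, repeated subadditivity gives $f(\o) \leq f(\o_1) + f(\o_2)$. For the large-coordinate part $\o_2$, every coordinate in its support exceeds $\alpha k$, so (as $\|\o\|_1 \leq k$) the support has at most $1/\alpha$ elements; subadditivity over disjoint supports then yields $f(\o_2) \leq \tfrac{1}{\alpha}\max_e f(\o(e)\1_e) \leq \tfrac{1}{\alpha}M$, the last step because each such $\o(e)$ lies in $(\alpha k, k]$ and is hence a candidate in the binary search defining $(d_{\max}, e_{\max})$. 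For the small-coordinate part $\o_1$, I apply Lemma~\ref{lem:basic} with $\s = \o_1$ (using $\x \wedge \y = \0$) and then Lemma~\ref{lem:xy}, obtaining $f(\o_1) \leq f(\o_1 \vee \x) + f(\o_1 \vee \y) \leq 4(f(\x) + f(\y))$. Combining the two parts, $f(\o) \leq 4(f(\x)+f(\y)) + \tfrac{1}{\alpha}M$.

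The crux is to relate the possibly infeasible vectors $\x, \y$ to the feasible trimmed vectors $\x', \y'$. I would prove the trimming inequality $f(\x) \leq \tfrac{2-\alpha}{1-\alpha}\,f(\x')$ (and symmetrically for $\y$). List the chunks added to $\x$ in insertion order, write $\x''$ for the discarded prefix and $\x'$ for the retained suffix (these have disjoint supports, so $\x = \x'' + \x'$). By maximality of $d_{(\x,e)}$ and the fact that marginals are non-increasing under DR-submodularity, every one of the $s_j$ units added in the $j$-th chunk has marginal gain at least $f(\x^{(j-1)})/k$, which sums to the multiplicative growth $f(\x^{(j)}) \geq (1 + s_j/k)\,f(\x^{(j-1)})$. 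Telescoping over the retained suffix, whose total size exceeds $(1-\alpha)k$ because the first discarded chunk has size at most $\alpha k$, and using $\prod_j(1+s_j/k) \geq 1 + \tfrac{1}{k}\sum_j s_j$, gives $f(\x) \geq (2-\alpha)f(\x'')$. Subadditivity $f(\x) \leq f(\x') + f(\x'')$ then rearranges to the claimed $f(\x) \leq \tfrac{2-\alpha}{1-\alpha}f(\x')$.

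Finally I would assemble the pieces. Since $\z$ is the best of $\x', \y', d_{\max}\1_{e_{\max}}$, we have $f(\z) \geq M$ and $f(\z) \geq \max\{f(\x'), f(\y')\}$, hence $f(\x) + f(\y) \leq \tfrac{2-\alpha}{1-\alpha}(f(\x') + f(\y')) \leq 2\tfrac{2-\alpha}{1-\alpha}f(\z)$. Substituting yields $f(\o) \leq \left(8\tfrac{2-\alpha}{1-\alpha} + \tfrac{1}{\alpha}\right)f(\z)$, which is exactly the stated ratio. Writing $\tfrac{2-\alpha}{1-\alpha} = 1 + \tfrac{1}{1-\alpha}$, the denominator becomes $8 + \tfrac{8}{1-\alpha} + \tfrac{1}{\alpha}$; setting its derivative to zero gives $8\alpha^2 = (1-\alpha)^2$, i.e. $\alpha = \tfrac{2\sqrt{2}-1}{7}$, and substituting back produces $17 + 4\sqrt{2}$, so the ratio is $\tfrac{1}{17+4\sqrt{2}} \approx 0.044$. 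The query bound is immediate: the initial binary search and each of the $2n$ per-element binary searches cost $O(\log k)$ oracle calls (these searches are valid because DR-submodularity makes the threshold predicate monotone in $d$), for $O(n\log k)$ total.

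I expect the main obstacle to be the trimming inequality. One must argue correctly that the retained suffix has size strictly greater than $(1-\alpha)k$, handle the degenerate case $f(\x'')=0$ (where the multiplicative bound is vacuous but the inequality holds trivially, and where one uses that the chunk values are non-decreasing since every threshold is non-negative), and verify that the multiplicative growth is unaffected by the interleaved updates to $\y$, since those leave $\x$ unchanged and the threshold at each $\x$-update depends only on the current value of $\x$.
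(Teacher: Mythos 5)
Your proposal is correct and follows essentially the same route as the paper: the same decomposition $f(\o)\le f(\o_1)+f(\o_2)$, the same use of Lemma~\ref{lem:basic} and Lemma~\ref{lem:xy} to bound $f(\o_1)$ by $4(f(\x)+f(\y))$, the same singleton bound $f(\o_2)\le \tfrac{1}{\alpha}f(d_{\max}\1_{e_{\max}})$, and the same trimming bound $f(\x')\ge\tfrac{1-\alpha}{2-\alpha}f(\x)$. The only (harmless) difference is that you derive the trimming bound via a multiplicative telescoping $\prod_j(1+s_j/k)\ge 1+\tfrac{1}{k}\sum_j s_j$ rather than the paper's additive summation of chunk gains against $f(\x-\x')$; both yield $f(\x-\x')\le f(\x)/(2-\alpha)$, and your optimization of $\alpha$ and the query count match the paper exactly.
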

	\begin{proof}
            \cp{The proof is based on comparing the values of $f(\x')$ and $f(\y')$ with $f(\x)$ and $f(\y)$, where $\x'$ and $\y'$ are feasible solutions obtained after the truncation step. By exploiting the submodularity property and Lemma~\ref{lem:xy}, we bound $f(\o_1)$ by a constant factor of $f(\s)$, while $f(\o_2)$ is bounded using the selection rule of the element $(d_{\max}, e_{\max})$.}
            \\
		We first prove the approximation ratio. By the selection rule of the algorithm, $\x'$ is a feasible solution. If $\|\x\|_1 \leq k$, then $\x' = \x$. If $\|\x\|_1 > k$, we have
		\begin{align}
		f(\x'_t) - f(\x'_{t-1}) &=f(\x'(e_t)\1_e \mid \x'_{t-1}) 
		\\       
		& \geq \frac{\x'(e_t) f((\x-\x') \vee \x'_{t-1}) }{k}\\
		& \geq \frac{ \x'(e_t)f(\x-\x')}{k}.
		\end{align}
		By the rule of selection $\x'$ and $\x$ then $k- \alpha k\leq \|\x'\|_1 \leq k$, we get
		\begin{align}
		f(\x)-f(\x-\x')&=\sum_{i=2}^t(f(\x'_i) - f(\x'_{i-1}))\geq \sum_{i=2}^t  \frac{ \x'(e_t)f(\x-\x')}{k}
		\\
		& \geq \frac{\|\x'\|_1}{k} f(\x-\x') \geq (1-\alpha)f(\x-\x')
		\end{align}
		which implies that $f(\x-\x')\leq \frac{f(\x)}{2-\alpha}$.
		By the submodularity property, we have $f(\x)\leq f(\x')+f(\x-\x')$ implying that
		\begin{align}
		f(\x') &\geq  f(\x)- f(\x-\x') \geq \frac{1-\alpha}{2-\alpha}f(\x).
		\end{align}
		By the similar analysis for $f(\y)$, we  obtain: $f(\y')\geq \frac{f(\y)}{2}$. Next, by using Lemma~\ref{lem:xy}, we have:
		\begin{align}
		f(\o_1) &\leq f(\o_1 \vee \x) + f(\o_1 \vee \y)\\
		& \leq 4(f(\x) + f(\y))
		\\
		& \leq 4\frac{2-\alpha}{1-\alpha}(f(\x') + f(\y'))
		\\
		& \leq 8\frac{2-\alpha}{1-\alpha} f(\s).
		\end{align}
		On the other hand, from the definition of $\o_2$ and the selection rule of $(d_{max},e_{max})$ at line \ref{d_max}, we have:
		\begin{align}
		f(\o_2) &=f\left(\sum_{e\in \{\o_2\}}\o_2(e)\1_e\right)\leq \sum_{e\in \{\o_2\}}f(\o_2(e)\1_e) \label{ine:1}
		\\
		& \leq \frac{1}{\alpha}f(d_{max}\1_{e_{max}}) \leq \frac{f(\s)}{\alpha}
		\end{align}
		where the inequality in \eqref{ine:1} due to the DR-submodularity of $f$. 
		Put them together, we get
		\begin{align}
		f(\o)\leq f(\o_1)+f(\o_2) \leq \Big(8\frac{2-\alpha}{1-\alpha}+\frac{1}{\alpha}\Big) f(\s).
		\end{align}
     To obtain the best approximation ratio, we choose the parameter $\alpha$ that minimizes the expression $8\frac{2-\alpha}{1-\alpha}+\tfrac{1}{\alpha}$. This optimization yields $\alpha = \tfrac{2\sqrt{2}-1}{7}$, which leads to the final approximation ratio of $\approx0.044$. 
    We provide the detail explanations bellows:
        Consider the function
        \[
        \Phi(\alpha)\;=\;8\,\frac{2-\alpha}{1-\alpha}+\frac{1}{\alpha},\qquad \alpha\in(0,1),
        \]
        Compute the derivative:
        \[
        \Phi'(\alpha)
        =8\cdot\frac{d}{d\alpha}\!\left(\frac{2-\alpha}{1-\alpha}\right)-\frac{1}{\alpha^{2}}
        =8\cdot\frac{1}{(1-\alpha)^2}-\frac{1}{\alpha^{2}}.
        \]
        Setting $\Phi'(\alpha)=0$ gives
        \[
        \frac{8}{(1-\alpha)^2}=\frac{1}{\alpha^2}
        \;\;\Longleftrightarrow\;\;
        \frac{1-\alpha}{\alpha}=2\sqrt{2}
        \;\;\Longleftrightarrow\;\;
        \alpha^\star=\frac{1}{1+2\sqrt{2}}
        =\frac{2\sqrt{2}-1}{7}.
        \]
        The second derivative,
        \[
        \Phi''(\alpha)=\frac{16}{(1-\alpha)^3}+\frac{2}{\alpha^{3}}>0\quad\text{for all }\alpha\in(0,1),
        \]
  Therefore, evaluating $\Phi$ at $\alpha^\star$ yields the exact minimum
        \[
        \Phi(\alpha^\star)
        =8\,\frac{2-\alpha^\star}{1-\alpha^\star}+\frac{1}{\alpha^\star}
        =17+4\sqrt{2}
        \]
        Hence the resulting approximation ratio is the reciprocal,
        \[
        \frac{1}{\Phi(\alpha^\star)}=\frac{1}{\,17+4\sqrt{2}\,}\;\approx\;0.04414.
        \]
    
		We now prove the query complexity of the algorithm.
		To find $(e_{\text{max}}, d_{\text{max}})$, for each element $e$, we employ a binary search method within the interval $[0, \lceil \alpha k \rceil]$ to determine $d_e$. This task takes $n \log(\alpha k)$ queries.
		The algorithm consists of a main loop where each element $e$ is processed. For each element, we apply the binary search method to find $d_{(\x,e)}$ and $d_{(\y,e)}$. Consequently, the number of queries for each element $e$ is $2 \log(\alpha k)$.
		Thus, the total number of queries is
		$$n \log(k) + 2n \log(\alpha k)=O(n \log k).$$ The proof is completed.
	\end{proof}
Although this approximation ratio is relatively small, it comes with the significant advantage of very low query complexity $O(n\log k)$, making the method practically efficient. Moreover, Algorithm~\algone~ primarily serves as the foundation for Algorithm~\algtwo~by providing an approximation ratio guarantee and identifying an approximate range of the optimum; thus, it is sufficient for Algorithm~1 to achieve any constant approximation ratio with linear-time complexity.
	
	\subsection{Fast  Approximation Plus Algorithm (\algtwo)}
	
	The \algtwo\ algorithm takes as input a DR-submodular function $f(\cdot)$, a ground set $E$, and parameters $k \in \mathbb{Z}_+$, $\alpha \in (0,1)$ and $\epsilon \in (0,1)$, and returns a vector $\s \in \mathbb{Z}_+^E$ satisfying $\|\s\|_1 \leq k$, such that $f(\s)$ approximates the optimal value.
	
	The main idea of the algorithm is to utilize the preliminary solution $\s'$, obtained from Algorithm~\algone, to estimate an upper bound of the optimal value as $\Gamma$ (Line~\ref{setgamma}), and to initialize a selection threshold $\theta = \Gamma / (4k)$. 
	
	The algorithm then constructs three candidate vectors $\x$, $\y$, and $\z$ based on a decreasing-threshold strategy with a multiplicative decay factor of $(1 - \epsilon)$. In this process, $\x$ and $\y$ are built in an alternating manner to ensure that their supports remain disjoint, while $\z$ is constructed greedily and independently. In each iteration, for every element $e \in E$, the algorithm determines the largest number of units $d$ that can be added to each vector such that the marginal gain remains at least $\theta$, and updates the vectors accordingly. The loop continues until the threshold $\theta$ drops below $\epsilon \Gamma / (16k)$, at which point the algorithm evaluates the objective function $f$ over the four candidates $\s'$, $\x$, $\y$, and $\z$, and returns the one that yields the highest function value.
	
	\begin{algorithm}\label{algtwo}
		\SetNlSty{text}{}{:}
		\KwIn{ $f: \mathbb{Z}^E_+ \to \mathbb{R}_+, E, k, \alpha, \epsilon$ }
		
		$\s' \leftarrow \text{\algone}(f, E, k, \alpha)$\label{callAlg1}\\
		$\Gamma = f(\s')\left(8\frac{(2-\alpha)}{1-\alpha} + \frac{1}{\alpha} \right), \ \theta = \frac{\Gamma}{4k}$ \label{setgamma}\\
		$\x \leftarrow \0$, $\y \leftarrow \0$,$\z \leftarrow \0$  \label{alg2:setxyz}\\
		\While{$\theta \geq \frac{\epsilon \Gamma}{16k}$}
		{\label{alg2:mainloop}
			
			\ForEach{$e \in E$}
			{\label{alg2:startinloop}
				$d_\x \leftarrow \max\{d \in \mathbb{Z} \mid 0 < d \leq k - \|\x\|_1, \ f(\1_e \mid \x+(d-1)\1_e) \geq \theta \}$ by using the binary search \label{alg2:finddx}
				
				$d_\y \leftarrow \max\{d \in \mathbb{Z} \mid 0 < d \leq k - \|\y\|_1, \ f(\1_e \mid \y+(d-1)\1_e) \geq \theta \}$ by using the binary search \label{alg2:finddy}
				
				$d_\z \leftarrow \max\{d \in \mathbb{Z} \mid 0 < d \leq k - \|\z\|_1, \ f(\1_e \mid \z+(d-1)\1_e) \geq \theta \}$ by using the binary search \label{alg2:finddz}
				
				$\z \leftarrow \z+d_\z\1_e$ \label{alg2:updatez}
				
				\eIf{$f((d_\x+\x(e))\1_e \mid \x -\x(e)\1_e) \geq  f( (d_\y+\y(e))\1_e\mid \y - \y(e)\1_e)$}{\label{alg2:comparexy}
					$\x \leftarrow \x + d_\x \1_e$\label{alg2:setx}
					
					$\y \leftarrow \y-\y(e)\1_e$
				}{
					$\y \leftarrow \y + d_\y \1_e$\label{alg2:sety}
					
					$\x \leftarrow \x-\x(e)\1_e$
				}
				
			}
			$\theta \leftarrow (1 - \epsilon) \theta$ \label{alg2:updatetheta}
		}\label{alg2:endloop}
		$\s \leftarrow \arg \max_{\t \in \{\s', \x, \y, \z\}} f(\t)$\label{alg2:getsolution}
		\\
		\Return $\s$\label{alg2:return}
		\caption{Fast  Approximation Plus Algorithm (\algtwo)}
		\label{alg2}
	\end{algorithm}
	We analyze and provide the theoretical guarantees of \algtwo\ in Theorem~\ref{theo:algtwo}. We define the following notations for theoretical analysis.
	\begin{itemize}
		\item $\o$ is the optimal solution and $\opt=f(\o)$ 
		\item $\theta_e$ is the value of $\theta$ at the time  element $e$ is added to the vector $\x$ or $\y$.
		\item  $E_\x = \{e \in E : \x(e)>0,\o(e)=0\}$, $E_{\x \wedge \o} = \{e \in E : \x(e)\geq  \o(e)>0\}$. 
		\item  $E_\y = \{e \in E : \y(e)>0,\o(e)=0\}$, $E_{\y\wedge \o} = \{e \in E : \y(e)\geq  \o(e)>0\}$.
		\item $E_{\o} = \{e \in E : \o(e)>0, \o(e)>\x(e),\o(e)>\y(e)\}$
		\item $\x_e$   be $\x$  before $e$ is added to $\x$ immediately  for all $e\in \{\x\}$.
		\item  $\y_e$   be $\y$  before $e$ is added to $\x$ immediately  for all $e\in \{\y\}$.
	\end{itemize}
	We analyze and provide the theoretical guarantees of \algtwo\ in Theorem~\ref{theo:algtwo}.
	\begin{theorem} 
		For $\epsilon >0$ and the constant $\alpha \in (0, 1)$,  Algorithm~\ref{algtwo} provides an approximation ratio of $\frac{1}{4} -\epsilon$ and has a query complexity of $O(\frac{n}{\epsilon}\log(\frac{1}{\epsilon})\log(k))$.
		\label{theo:algtwo}
	\end{theorem}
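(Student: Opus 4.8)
The plan is to use \algone\ only to fix the scale of the optimum, and then run a two-vector threshold-greedy accounting driven by Lemma~\ref{lem:basic}. Write $C := 8\frac{2-\alpha}{1-\alpha}+\frac1\alpha$ for the constant in Theorem~\ref{theo:algone}, so the vector $\s'$ returned by \algone\ obeys $f(\s')\ge \opt/C$; since $\s'$ is feasible and $\o$ is optimal, also $f(\s')\le\opt$. Hence the guess $\Gamma = C\,f(\s')$ is sandwiched as $\opt\le\Gamma\le C\,\opt$, and this two-sided bound is the workhorse: the upper bound keeps the final threshold $\theta_f$ (the smallest $\theta$ for which the while-loop body runs, so $\frac{\epsilon\Gamma}{16k}\le\theta_f<\frac{\epsilon\Gamma}{16k(1-\epsilon)}$) at the scale $\theta_f=\Theta(\epsilon\,\opt/k)$, while $\Gamma\ge\opt$ ensures the initial threshold is large enough that no high-marginal unit is admitted prematurely.

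I would then record two structural facts. First, the update on Lines~\ref{alg2:comparexy}--\ref{alg2:sety} writes to exactly one of $\x(e),\y(e)$ while zeroing the other, so the invariant $\x\wedge\y=\0$ is preserved and holds at termination; applying Lemma~\ref{lem:basic} with $\s=\o$ gives
\begin{align}
\opt = f(\o) \le f(\o\vee\x) + f(\o\vee\y).
\end{align}
Second, by DR-submodularity and Lemma~\ref{lem:basic1}, completing $\x$ to $\o\vee\x$ telescopes into single-element contributions $\sum_{e}(\o(e)-\x(e))^{+}f(\1_e\mid\x)$, which I would split over the classes $E_\o$ and $E_{\y\wedge\o}$ exactly as in the proof of Lemma~\ref{lem:xy}.

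The heart of the argument is to sharpen Lemma~\ref{lem:xy} from a factor $4$ to a factor $2$, i.e.\ to prove
\begin{align}
f(\o\vee\x)+f(\o\vee\y)\ \le\ 2\bigl(f(\x)+f(\y)\bigr)+O(\epsilon\,\opt).
\end{align}
Two features of \algtwo\ absent from \algone\ make this possible. For $e\in E_\o$ the coordinate is frozen because at the final threshold no further unit cleared the bar, so $f(\1_e\mid\x)<\theta_f=\Theta(\epsilon\opt/k)$; summing over the at most $k$ missing units of $\o$ contributes only $O(k\theta_f)=O(\epsilon\opt)$, replacing the $f(\x)$ term of Lemma~\ref{lem:xy}. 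For $e\in E_{\y\wedge\o}$ the element is finally held by $\y$, and the \emph{full-coordinate} comparison of Line~\ref{alg2:comparexy} guarantees that reinstating $e$ in $\x$ is no more valuable than the contribution $e$ already makes to $\y$; telescoping over $E_{\y\wedge\o}$ bounds this cross term by $f(\y)$ alone, with no spurious additive $\frac{\y(e)f(\x)}{k}$ term. The step I expect to be the main obstacle is the case where $\x$ or $\y$ saturates the budget before the threshold decays: there the ``frozen coordinate has small marginal'' reasoning for $E_\o$ fails, and I would instead lean on the independently built greedy vector $\z$ together with the cutoff $\frac{\epsilon\Gamma}{16k}$ to recover a comparable bound, so that $\max\{f(\x),f(\y),f(\z)\}$ still dominates $\tfrac14\opt$ up to $O(\epsilon\opt)$.

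Substituting the sharpened bound into the Lemma~\ref{lem:basic} inequality gives $\opt\le 4\max\{f(\x),f(\y)\}+O(\epsilon\opt)$, and since the returned $\s$ satisfies $f(\s)\ge\max\{f(\x),f(\y),f(\z)\}$, rearranging and rescaling $\epsilon$ yields $f(\s)\ge(\tfrac14-\epsilon)\opt$. The query complexity is then bookkeeping: \algone\ costs $O(n\log k)$; the while-loop runs $T$ rounds, and $(1-\epsilon)^{T}\Gamma/4k<\epsilon\Gamma/16k$ forces $T=O(\tfrac1\epsilon\log\tfrac1\epsilon)$; each round performs, for every $e\in E$, three binary searches of cost $O(\log k)$, giving $O\!\left(\frac{n}{\epsilon}\log\tfrac1\epsilon\log k\right)$ as claimed.
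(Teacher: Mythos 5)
Your overall architecture matches the paper's: calibrate $\Gamma$ with the output of \algone\ so that $\opt\le\Gamma\le\bigl(8\tfrac{2-\alpha}{1-\alpha}+\tfrac1\alpha\bigr)\opt$, maintain the disjointness invariant $\x\wedge\y=\0$, invoke Lemma~\ref{lem:basic} with $\s=\o$, bound the $E_{\o}$ completion terms by the smallness of the final threshold, and bound the $E_{\y\wedge\o}$ cross terms via the full-coordinate comparison of Line~\ref{alg2:comparexy}; the query-complexity bookkeeping is also identical. However, there is a genuine gap exactly where you flag ``the main obstacle'': the case where $\x$ (or $\y$) saturates the budget before the threshold has decayed to its floor. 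Your proposed fix --- falling back on the independently built greedy vector $\z$ --- does not work as stated: $\z$ is constructed without the disjointness property, so Lemma~\ref{lem:basic} gives you nothing when applied to $\z$, and the bound $f(\z)\ge k\theta$ only yields $\opt/4$ while $\theta$ is still at its initial value $\Gamma/(4k)$. (Indeed, the paper's own analysis never invokes $\z$.)

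The paper closes this case in two pieces. If a vector reaches $\|\x\|_1=k$ already during the first pass of the while loop, every admitted unit cleared the threshold $\Gamma/(4k)\ge\opt/(4k)$, so $f(\x)\ge k\cdot\Gamma/(4k)\ge\opt/4$ and the proof ends immediately. If saturation occurs at some later threshold $\theta_{(l)}$, the ``frozen coordinate has small marginal'' reasoning is replaced by a unit-charging argument: since $\|\x\|_1=k\ge\|\o\|_1$, the number of missing units satisfies $\sum_{e\in E_\o}t_e\le\sum_{e\in E_\x}\x(e)$, each missing unit has marginal below $\theta_{(l)}/(1-\epsilon)$, and each unit counted in $E_\x$ was accepted at a threshold $\theta_e\ge\theta_{(l)}$, whence
\begin{align}
\sum_{e\in E_\o}f(t_e\1_e\mid\x)\ \le\ \frac{1}{1-\epsilon}\sum_{e\in E_\x}f(\x(e)\1_e\mid\x_e)\ \le\ \frac{f(\x)}{1-\epsilon},
\end{align}
which is precisely what feeds the factor-$2$ (rather than factor-$4$) bound you are after. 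Without this charging step your claimed inequality $f(\o\vee\x)+f(\o\vee\y)\le 2\bigl(f(\x)+f(\y)\bigr)+O(\epsilon\,\opt)$ is unproven in the saturated subcase, and the $\tfrac14-\epsilon$ guarantee does not follow. The remainder of your proposal is consistent with the paper's proof.
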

	\begin{proof}
          The main idea of the proof is to bound the marginal contributions of elements added to the solution $\x,\y$. By the selection rule and threshold updates, each chosen element contributes at least as much as the threshold in its iteration, ensuring that the total value is not much smaller than $\opt$. This leads to the $(\tfrac{1}{4}-\epsilon)$ approximation guarantee, with query complexity obtained from the use of binary search in the loops.
            \\
		After the first completion of the $while$ loop, we have two possible cases: 
		\\
		(1) If $\|\x\|_1 = k$ or $\|\y\|_1 = k$. Assume $\|\x\|_1 = k$ by the selection rule of $e \in \{\x\}$ we have $$f(\x)\geq \|\x \|_1 \frac{\Gamma}{4k} \geq \frac{\opt}{4}.$$
		Therefore, the approximation ratio holds. It is similar to the case $\|\y\|_1 = k$.
		\\
		(2)	If	 $\|\x\|_1 < k$ and $\|\y\|_1 < k$.
		Denote $\theta'$ as $\theta$ when the last time an element $e$ to the vector $\x$. Considering the vector $\x$ at the end of the while loop, we examine two subcases:
		\begin{itemize}
			\item If $\|\x\|_1 = k$.
			Assume that $\theta_{(l)}$ is $\theta$ when the last element is added into $\x$. We obtain
			\begin{align}
			f(\x)=\sum_{e\in \{x\}}f(\x(e)\1_e|\x_e)\geq k\theta_{(l)}.
			\end{align}
			Any $e \in E_{\o}$ not added to $\x$ has a marginal gain per less than the $\theta$  in the previous iteration, i.e, $f(\1_e|\x_e)< \frac{\theta_{(l)}}{1-\epsilon}$. Since $\|\x\|_1=k\geq \|\o\|_1$ we have 
			\begin{align}
			\sum_{e\in E_{\x}}\x(e) + \sum_{e\in E_{\x \wedge\o}}\x(e) + \sum_{e\in E_{\o}}\x(e)=k\geq  \sum_{e\in E_{\o}}\x(e) + \sum_{e\in E_{\x \wedge\o}}\o(e) + \sum_{e\in E_{\o}}t_e
			\end{align}
			which implies that $\sum_{e\in E_{\x}}\x(e)\geq   \sum_{e\in E_{\o}}t_e$. Put them together, we have
			\begin{align}
			\sum_{e\in E_{\o}}f(t_e\1_e|\x)&\leq  \sum_{_e\in E_{\o}}t_e f(\1_e|\x_e) \ \ \mbox{(By Lemma~\ref{lem:basic1})}
			\\
			& \leq \sum_{e\in E_{\o}} \frac{t_e\theta_{(l)}}{1-\epsilon}=\frac{\theta_{(l)}}{1-\epsilon} \sum_{e\in E_{\o}} t_e
			\\
			&  \leq  \frac{\theta_{(l)}}{1-\epsilon} \sum_{e\in E_{\x}} \x(e)= \sum_{e\in E_{\x}}  \frac{\x(e)\theta_{(l)}}{1-\epsilon}
			\\
			& \leq   \sum_{e\in E_{\x}} \frac{\x(e)\theta_{e}}{1-\epsilon}
			\\
			&  \leq \sum_{e\in E_{\x}} \sum_{d=1}^{\x(e)}\frac{f(\1_e|\x_e+(d-1)\1_e)}{1-\epsilon}
			\\
			&  \leq \sum_{e\in E_{\x}}\frac{f(\x(e)\1_e|\x_e)}{1-\epsilon}.
			\end{align}
			\item If $\|\x\|_1 < k$. In this case any $e \in E_{\o}$ not added to $\x$ has a marginal gain per less than the $\theta$  at the last iteration. So we have $f(\1_e|\x_e)< \epsilon \frac{M}{k}$ and thus
			\begin{align}
			\sum_{e\in E_{\o}}f(t_e\1_e|\x)
			& \leq \sum_{e\in E_{\o}} t_e   \frac{\epsilon M}{k}
			\leq  \epsilon M \leq \epsilon\opt.
			\end{align}
			Combining two cases, we have
			\begin{align}
			\sum_{e\in E_{\o}}f(t_e\1_e|\x)
			& \leq \epsilon\opt +\sum_{e\in E_{\x}}\frac{f(\x(e)\1_e|\x_e)}{1-\epsilon}.	\label{pro:x}
			\end{align}
			Similarity, we  have the same property for $\y$
			\begin{align}
			\sum_{e\in E_{\o}}f(t_e\1_e|\y)
			& \leq \epsilon\opt +\sum_{e\in E_{\y}}\frac{f(\y(e)\1_e|\y_e)}{1-\epsilon}
			\label{pro:y}.
			\end{align}
		\end{itemize}
		For  $e \in E_{\o\wedge \y}$, assume $d_{(\y, e)} \1_e$ is added $\y$ at the iteration $j$, it was not added $\y$ at the previous iteration,  we have $f(\1_e|\x)\leq\frac{\theta_e}{1-\epsilon}$. Thus
		\begin{align}
		f(\y(e) \1_e|\x)=	f(d_{(\y, e)} \1_e|\x) < \frac{d_{(\y, e)} \theta_e}{1-\epsilon}\leq 	\frac{f(d_{(\y, e)} \1_e|\y_e)}{1-\epsilon}=\frac{f(\y(e)\1_e|\y_e)}{1-\epsilon}.
		\end{align}
		Put them together, we analyze the relationship between $f(\o \vee \x)$ and $f(\x)$ as follows:
		\begin{align}
		f(\o \vee \x) - f(\x) &\leq \sum_{e \in E_\o} f(t_e \1_e \mid \x) + \sum_{e \in E_{\o\wedge \y}} f(\y(e) \1_e \mid \x) 
		\\
		&\leq \epsilon\opt +\sum_{e\in E_{\x}}\frac{f(\x(e)\1_e|\x_e)}{1-\epsilon} +  \sum_{e \in E_{\o\wedge \y}} f(\y(e) \1_e \mid \y_e)\label{proof:x} 
		\end{align}
		Similarly, we  have the same property for $\y$
		\begin{align}
		f(\o \vee \y) - f(\y) 
		&\leq \epsilon\opt +\sum_{e\in E_{\y}}\frac{f(\y(e)\1_e|\x_e)}{1-\epsilon} +  \sum_{e \in E_{\o\wedge \x}} f(\x(e) \1_e \mid \x_e)\label{proof:y} 
		\end{align}	
		Combining \eqref{proof:x} and \eqref{proof:y}, we have
		\begin{align}
		f(\o)-f(\x)-f(\y)&\leq 	f(\o \vee \x) - f(\x)+	f(\o \vee \y) - f(\y) \\
		& \leq \sum_{e\in E_{\x}}\frac{f(\x(e)\1_e|\x_e)}{1-\epsilon} +     \sum_{e \in E_{\o\wedge\x}} f(\x(e) \1_e \mid \x_e) 
		\\
		&+\sum_{e\in E_{\y}}\frac{f(\y(e)\1_e|\x_e)}{1-\epsilon} + \sum_{e \in E_{\o\wedge \y}} f(\y(e) \1_e \mid \y_e)
		\\
		& \leq \frac{f(\x)+f(\y)}{1-\epsilon}.
		\end{align}
		Therefore
		\begin{align}
		f(\s) \geq \frac{f(\x)+f(\y)}{2} \geq \frac{1-\epsilon}{4-2\epsilon}f(\o)=(\frac{1}{4}-\frac{\epsilon}{2(4-2\epsilon)})\opt \geq (\frac{1}{4}-\epsilon)\opt.
		\end{align}
		For the query complexity, the \textbf{while} loop has a query complexity of \cp{$O(\frac{1}{\epsilon}\log(\frac{1}{ \epsilon}))$}. The nested \textbf{for} loop has a query complexity of $O(n \log(k))$. Therefore, the overall query complexity of the algorithm is \cp{$O(\frac{n}{\epsilon}\log(\frac{1}{ \epsilon})\log(k))$}.
		The proof is completed.
	\end{proof}

    Oveis Gharan and Vondrák~\cite{gharan2011submodular} established that no algorithm can achieve an approximation ratio better than $0.478$ for general DR-submodular maximization, which sets a natural upper bound on what can be expected. Within this landscape, the best-known guarantee is $0.401$ \cite{buchbinder2024constrained}; however, the underlying algorithm requires polynomial complexity and is therefore impractical for large-scale applications. On the other hand, the fastest near-linear time algorithms achieve an approximation ratio of $\tfrac{1}{4}$ \cite{pham-ijcai23,Han2021_knap}, but these methods fundamentally rely on randomness. Our proposed algorithm \algtwo\ closes this gap: it attains the same $\tfrac{1}{4}$ approximation ratio with comparable near-linear complexity, while remaining entirely \emph{deterministic}. To the best of our knowledge, this makes \algtwo\ the first deterministic algorithm to simultaneously combine near-linear running time with the strongest known approximation guarantee in this problem.

	\section{Experimental Evaluation}
	\label{sec:expr}
	In this section, we present our experimental results and demonstrate the superiority of our algorithm compared to other baseline algorithms.
	\subsection{Experiment Settings}
	
	\subsubsection{Compared Algorithms}
	As discussed in the related work, no existing research addresses the Non-Monotone DR-submodular Maximization under the Size Constraint problem. Therefore, we evaluate our proposed algorithms by comparing them against the state-of-the-art methods for submodular function maximization under a knapsack constraint, specifically RLA and SMKRANACC. The comparison is detailed as follows:
	
	\begin{itemize}
		\item \textbf{\algone \ (this work):} Approximation ratio of $0.044$  with a time complexity of $ O(n \log(k)) $.
		\item \textbf{\algtwo  \ (this work):} Approximation ratio of $ \frac{1}{4} - \epsilon $ with a time complexity of $ O\left(\log(\frac{1}{\epsilon})\frac{n}{\epsilon} \log\left(\frac{k}{\epsilon}\right)\right)$.
		\item \textbf{RLA~\cite{pham-ijcai23}:} Approximation ratio of $\frac{1}{4}-\epsilon$ with a time complexity of $O\left(\frac{n}{\epsilon} \log (k) \log \frac{1}{\epsilon} \right)$.
		\item \textbf{SMKRANACC~\cite{Han2021_knap}:} Approximation ratio of $ \frac{1}{4} - \epsilon $ with a time complexity of $O\left( \frac{n}{\epsilon} \log (k) \log \frac{k}{\epsilon} \right)$.
	\end{itemize}
	
	To enable the execution of the RLA and SMKRANACC algorithms in the DR-submodular setting, we applied the necessary modifications based on the reduction framework proposed by Ene and Nguyen~\cite{ene2016reduction}.
	
	\subsubsection{Applications and Datasets}
	
	Based on the Revenue Maximization application using the Submodular function, we build the Revenue Maximization application using DR-submodular function as follows: Consider an undirected social network represented as a graph $ G = (V, E) $, where $ V $ is the set of users (nodes) and $ E $ is the set of connections (edges) between them. Each edge $ (u, v) \in E $ is associated with a non-negative weight $ w_{uv} \in [0, 1] $ representing the strength of influence from user $ u $ to user $ v $. We assume a limited advertising budget that needs to be allocated to maximize product adoption in the network. For each user $ u $, an investment $ \x(u) $ represents the level of effort to promote the product through that user.
	
	The set of users receiving investment is $ \{\x\} = \{ u \in V \mid \x(u) > 0 \} $. Our objective is to maximize the expected revenue, which is the number of users who will adopt the product based on the influence of users in $ \{\x\} $ on the rest of the network.
	
	To quantify the expected revenue, we define the revenue function $ f(\x) $ as follows:
	
	\[
	f(\x) = \sum_{u \in V \setminus \{\x\}} f_u ( \sum_{v \in \{\x\}} w_{uv} \x(v))
	\]
	
	where:
	\begin{itemize}
		\item $ f_u(t) = \log(1 + t^{\alpha_u}) $ is a concave, non-negative function representing the likelihood that user $ u $ will adopt the product given the cumulative influence $ t = \sum_{v \in \{x\}} w_{uv} \x(v) $. Here, $ \alpha_u \in (0, 1) $ is a personalized parameter for each user, modeling the saturation effect of influence from $ \{\x\} $ to $ u $.
		\item  $ w_{uv} $ represents the influence strength from node $ u $ to node $ v $.
	\end{itemize}
	
	The revenue function $f(\x)$ is DR-submodular function. This can be proven by considering the concavity of $ f_u(t) $, as $ f_u(t) $ is a concave function, meaning that the increment in $ f_u(t) $ decreases as $ t $ increases. When adding an element to the set of invested users $ \{\x\} $, the effect on the revenue will be stronger for vectors $ \x $ with fewer elements (i.e., users who haven't received investment). Thus, adding an element results in a stronger increase in revenue for vectors with fewer elements, demonstrating the diminishing return on influence, and satisfying the DR-submodular inequality:
	
	\[
	f(\x + \1_e) - f(\x) \geq f(\y + \1_e) - f(\y)
	\]
	
	for all $ \x \leq \y $. Therefore, the revenue function $ f(\x) $ is DR-submodular.
	
	We utilize three datasets (Facebook, AstroPh, and Enron) to evaluate the performance of our algorithms. These datasets vary in size: Facebook is the smallest, containing only $4,039$ vertices, whereas Enron is the largest, comprising $36,692$ vertices. AstroPh, with $18,772$ vertices, falls between the two in terms of size. These benchmark datasets are obtained from \href{https://snap.stanford.edu/data/email-Enron.html}{SNAP} (see Table~\ref{tab:dataset}).
	
	\paragraph{Other settings.} We executed the \algone\ algorithm with various values of $\alpha$ in the set $\{0.1, 0.3, 0.5, 0.7, 0.9\}$, denoted in the charts as \algone-$0.1$, $\ldots$, \algone-$0.9$. For the \algtwo\ algorithm, we fixed the parameters as $\alpha = \frac{2\sqrt{2}-1}{7} \approx 0.2612$. All algorithms were executed with the parameter $\epsilon = 0.1$. For the randomized algorithms RLA and SMKRANACC, we performed 10 runs and reported the average result. The experiments were conducted on an HPC server cluster with the following specifications: partition = large, number of CPU threads = 16, number of nodes = 2, and maximum memory = 3,073 GB. The running time of each algorithm was measured in seconds and includes both the oracle queries and the additional computational overhead incurred by the respective procedures.  In all result figures, the x-axis representing the budget $k$ is normalized by $n$, the size of the corresponding dataset.

	\begin{table}[hpt]
		\caption{Details of datasets used in the experiments.}
		\label{tab:dataset}
		\centering
		\footnotesize{
			\begin{tabular}{ccccc}
				\hline
				\textbf{Dataset} & \textbf{Nodes} & \textbf{Edges} & \textbf{Types} & \textbf{Sources}
				\\
				\hline
				Facebook & 4,039 & 88,234 & Undirected  & \href{https://snap.stanford.edu/data/ego-Facebook.html}{SNAP}
				\\
				AstroPh& 18,772 & 198,110 & Undirected  & \href{https://snap.stanford.edu/data/ca-AstroPh.html}{SNAP}
				\\
				Enron & 36,692 & 183,831 & Undirected  & \href{https://snap.stanford.edu/data/email-Enron.html}{SNAP}					
				\\
				\hline
			\end{tabular}
		}
	\end{table}
	
	\subsection{Experiment Results}
	
	\begin{figure}
		\centering
		\includegraphics[width=0.95\linewidth]{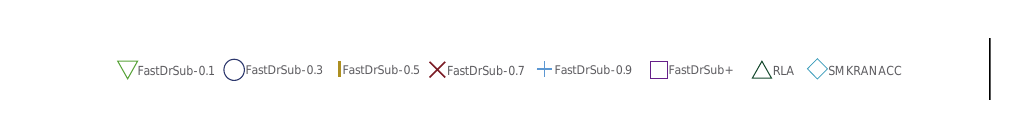}
		\\
		\includegraphics[width=0.32\linewidth]{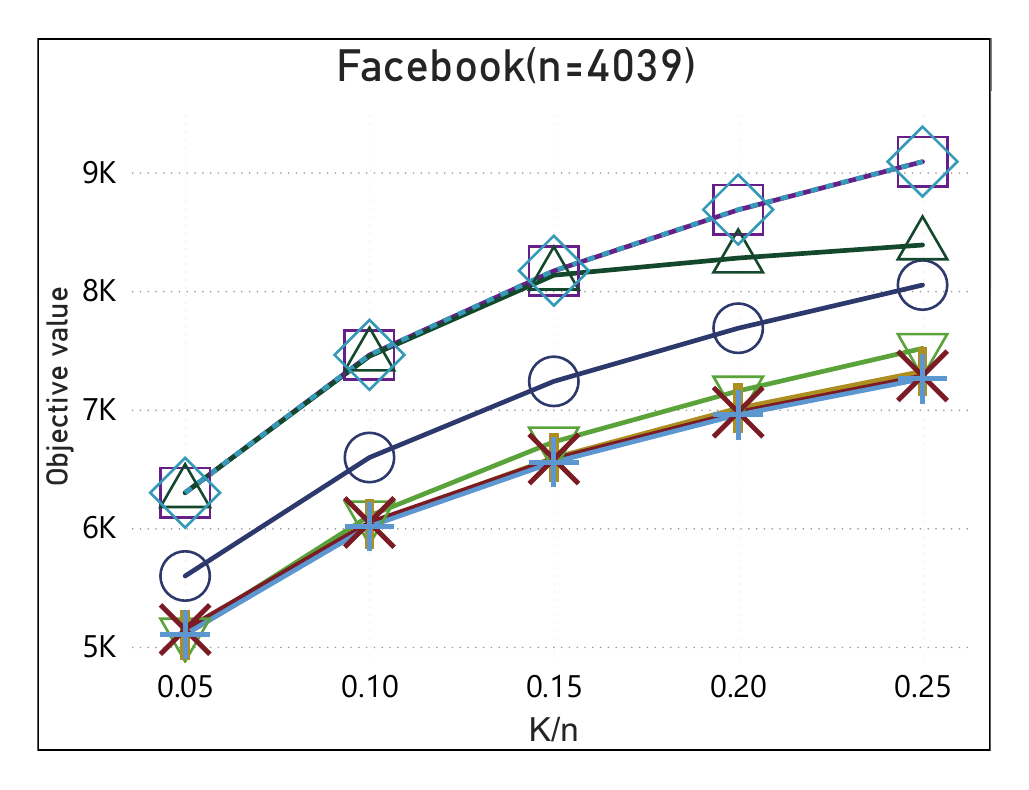}
		\includegraphics[width=0.32\linewidth]{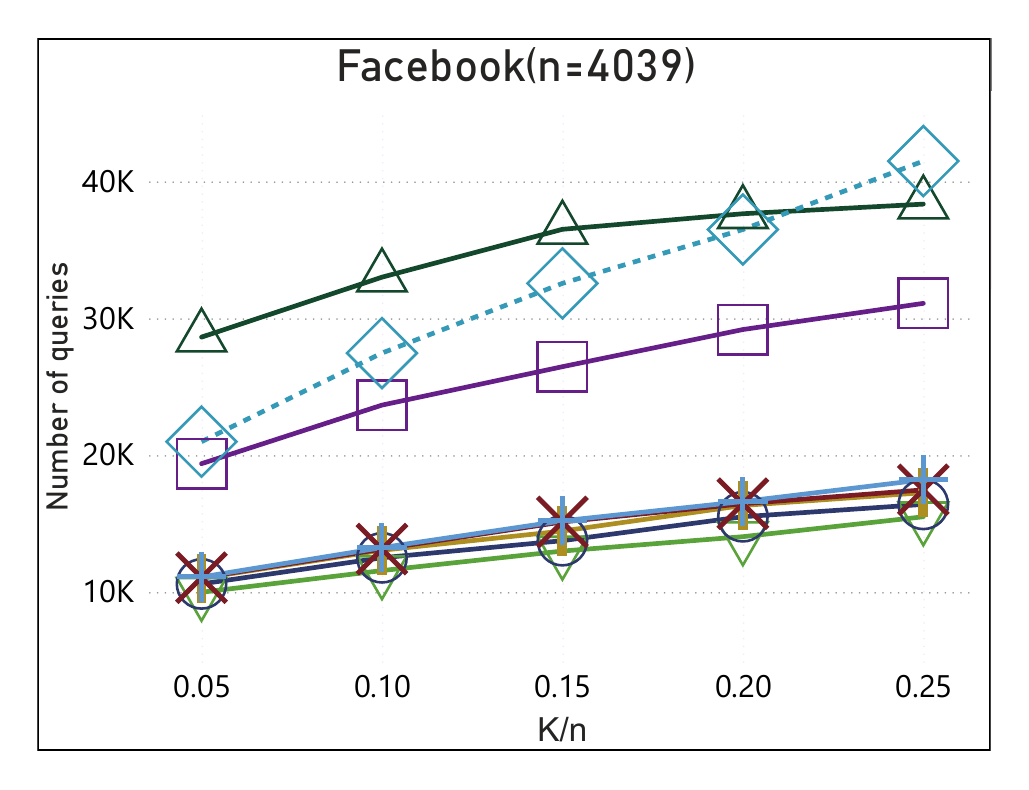}
		\includegraphics[width=0.32\linewidth]{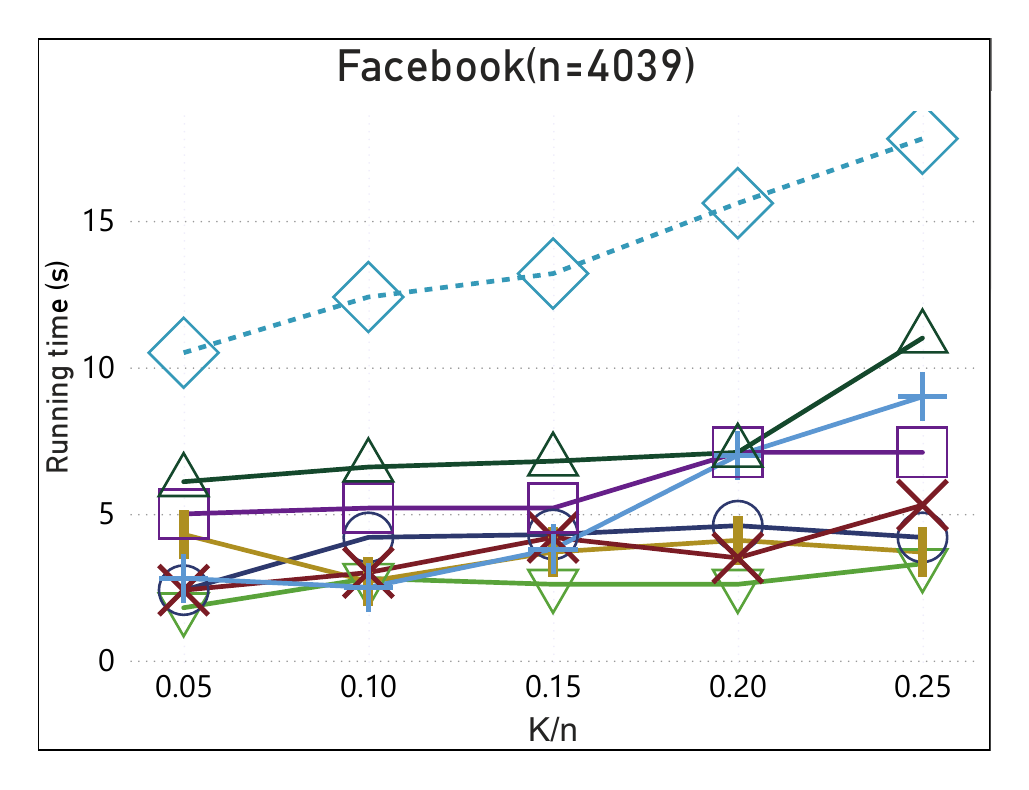}
		\\
		(a) \hspace{3.5cm} (b) \hspace{3.5cm} (c)
		\caption{Performance of algorithms on Revenue Maximization for the Facebook dataset: (a) The objective values, (b) The number of queries and (c) their running time.}
		\label{fig:facebook}
	\end{figure}
	
	\begin{figure}
		\centering
		\includegraphics[width=0.95\linewidth]{legend.pdf}
		\\
		\includegraphics[width=0.32\linewidth]{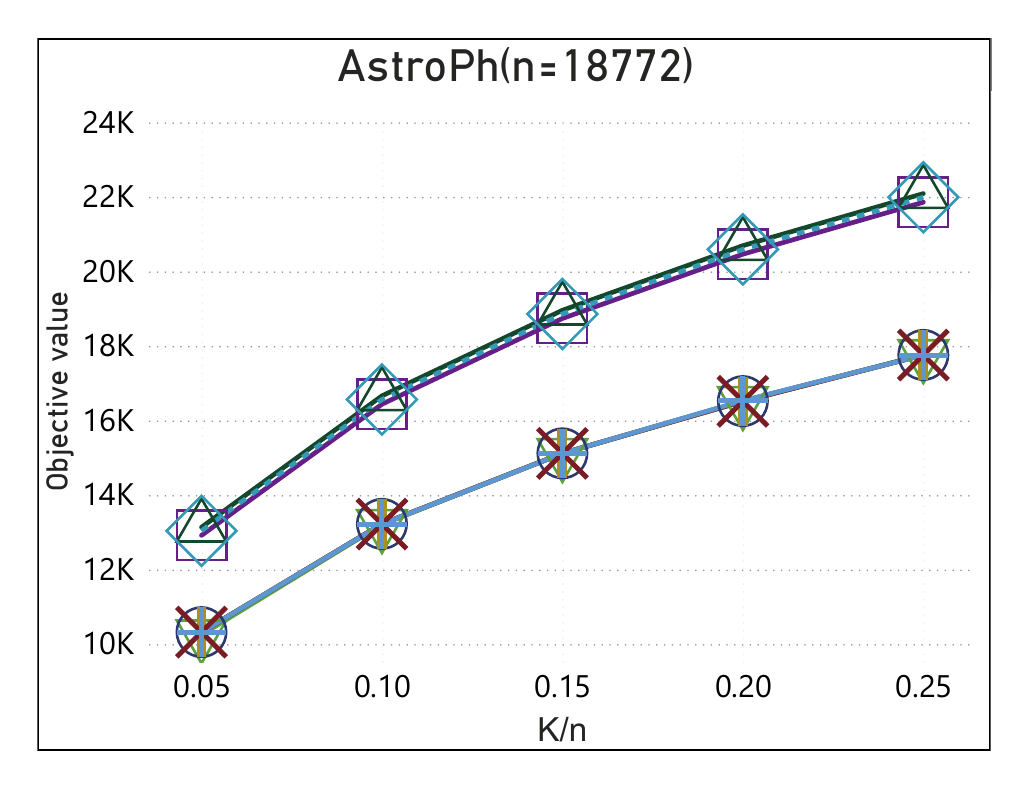}
		\includegraphics[width=0.32\linewidth]{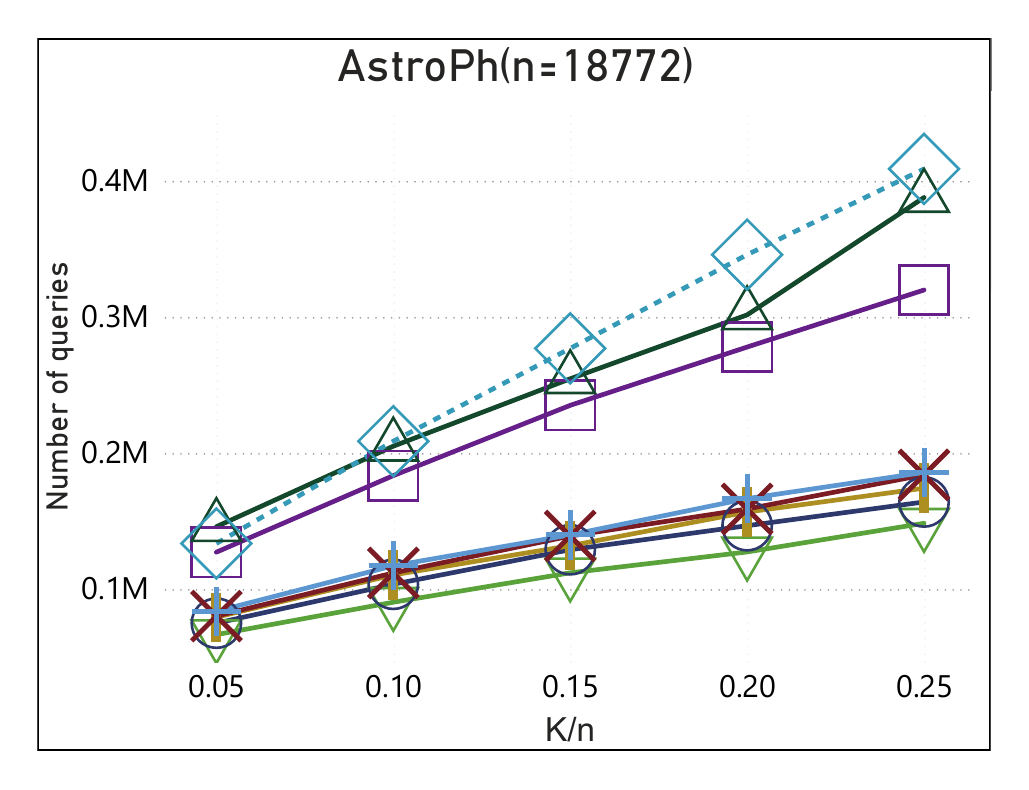}
		\includegraphics[width=0.32\linewidth]{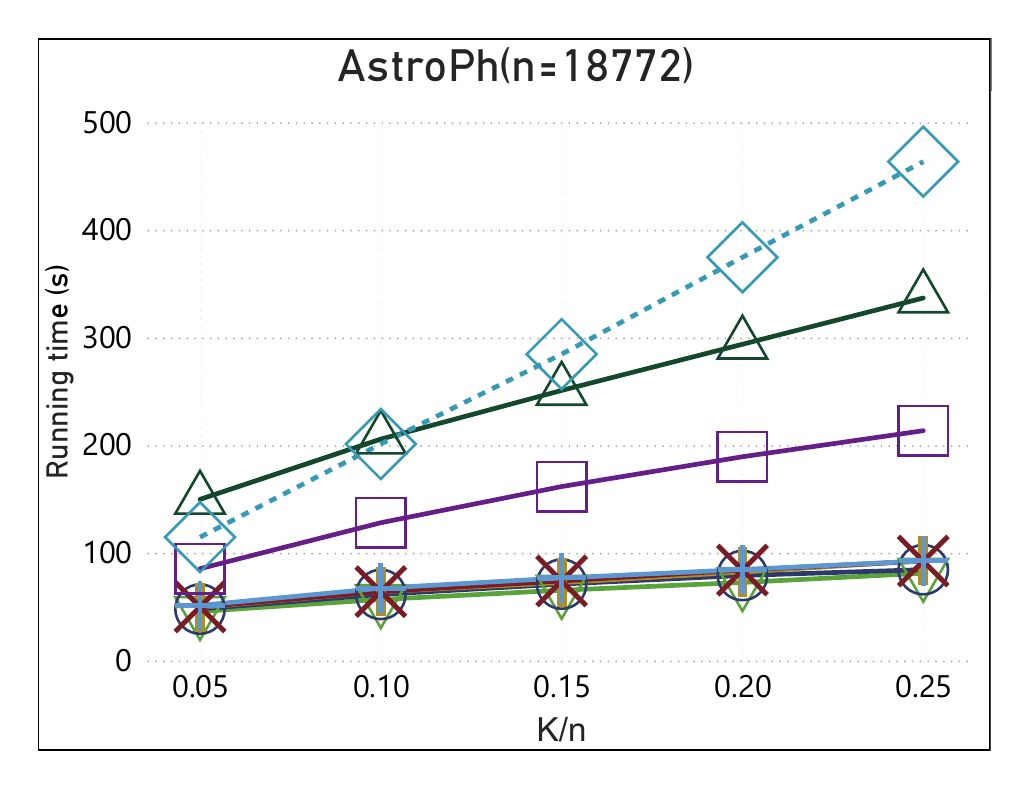}
		\\
		(a) \hspace{3.5cm} (b) \hspace{3.5cm} (c)
		\caption{Performance of algorithms on Revenue Maximization for the AstroPh dataset: (a) The objective values, (b) The number of queries and (c) their running time.}
		\label{fig:astro}
	\end{figure}
	
	\begin{figure}
		\centering
		\includegraphics[width=0.95\linewidth]{legend.pdf}
		\\
		\includegraphics[width=0.32\linewidth]{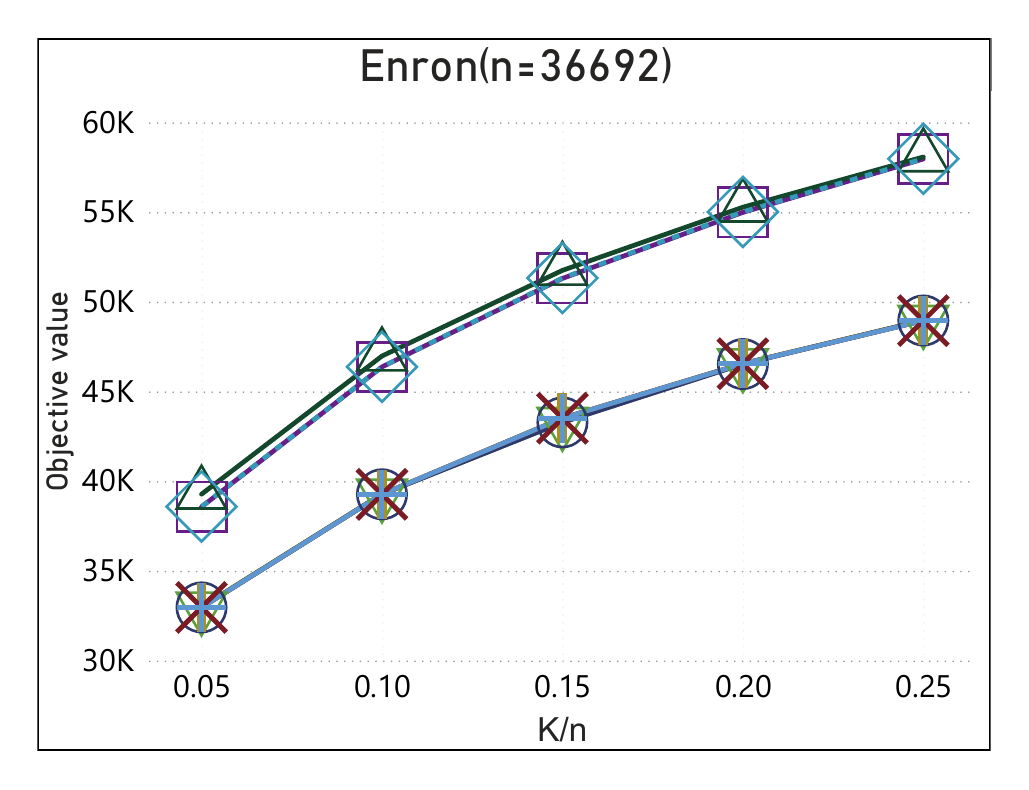}
		\includegraphics[width=0.32\linewidth]{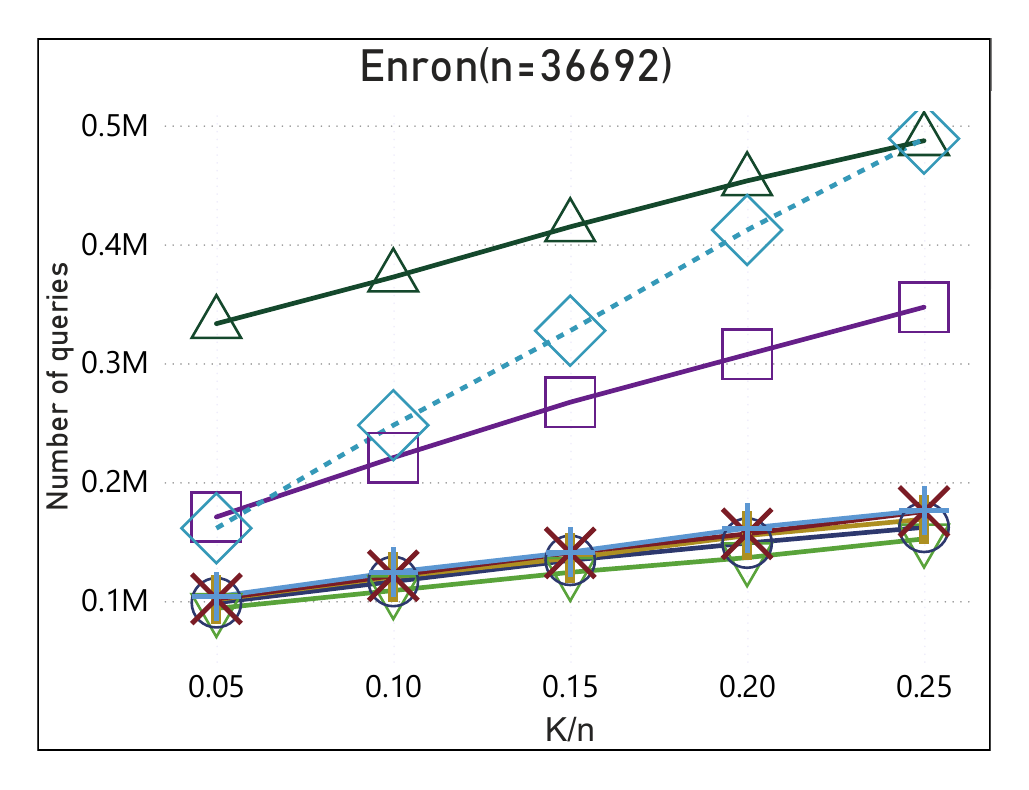}
		\includegraphics[width=0.32\linewidth]{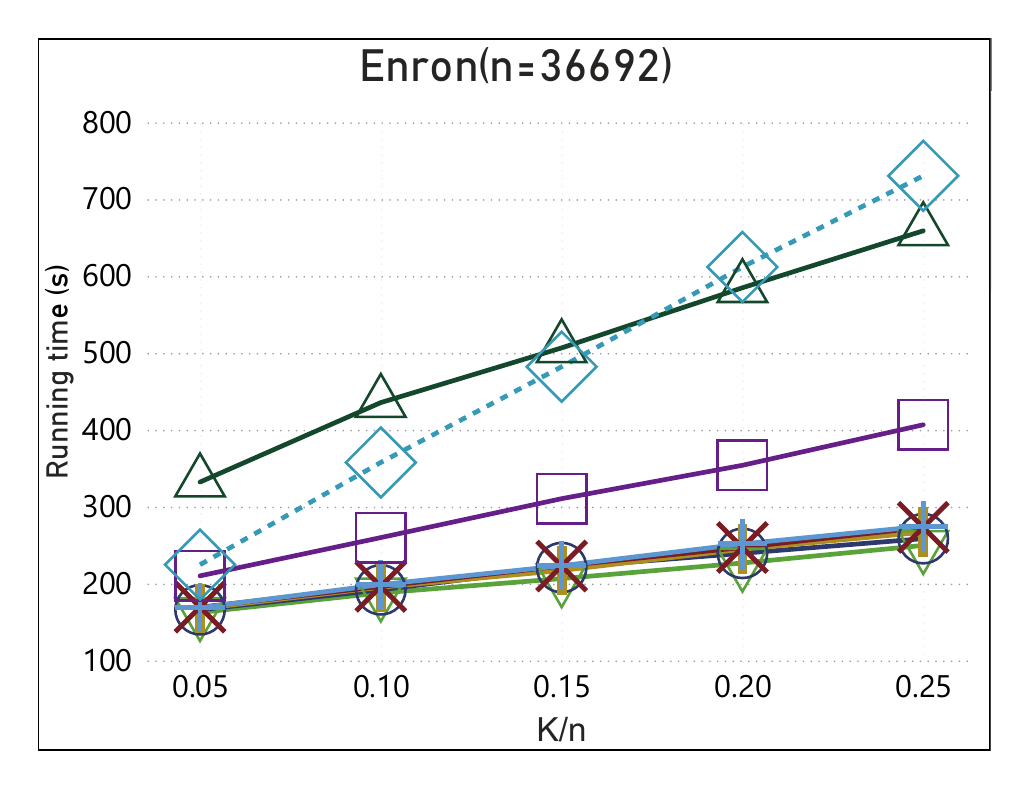}
		\\
		(a) \hspace{3.5cm} (b) \hspace{3.5cm} (c)
		\caption{Performance of algorithms on Revenue Maximization for the Enron dataset: (a) The objective values, (b) The number of queries and (c) their running time.}
		\label{fig:enron}
	\end{figure}
	The results for the objective function values are presented in Figures~\ref{fig:facebook}a, \ref{fig:astro}a, and \ref{fig:enron}a. Among all methods, \algtwo\ consistently achieves high objective function values, comparable to RLA and SMKRANACC, and outperforms the baseline \algone\ variants with margins ranging from 1.2 to 1.4 times. Across all three datasets, \algtwo\ maintains stable and competitive performance, with observed differences between high-performing algorithms being statistically insignificant. In the Facebook dataset, RLA shows slightly lower performance than \algtwo\ and SMKRANACC at the 0.2 and 0.25 marks. For the \algone\ algorithm with varying $\alpha$ values, the objective values are relatively similar in the AstroPh and Enron datasets due to their large size. In contrast, for the Facebook dataset, $\alpha = 0.3$ yields the highest outcome among \algone\ settings, while $\alpha = 0.9$ gives the lowest, although the variation remains modest.
	
	The results for the number of queries are shown in Figures~\ref{fig:facebook}b, \ref{fig:astro}b, and \ref{fig:enron}b. The algorithm \algtwo\ requires more queries than \algone\ variants, yet it generates significantly fewer queries than RLA and SMKRANACC. This efficiency is particularly evident in the Facebook and Enron datasets, where \algtwo\ produces approximately 1.2 times fewer queries than RLA, indicating its favorable balance between performance and computational cost. The number of queries for \algone\ remains consistent across different $\alpha$ values, with only slight variations across datasets.
	
	Figures~\ref{fig:facebook}c, \ref{fig:astro}c, and \ref{fig:enron}c present the results for running time. The running time trends closely mirror the query counts, as expected. The \algone\ algorithm with various $\alpha$ values achieves the shortest running times across all datasets, with minimal differences among $\alpha$ configurations. However, \algtwo\ demonstrates a strong trade-off between solution quality and efficiency—it is faster than both RLA and SMKRANACC while delivering comparable objective values. In particular, on the AstroPh and Enron datasets, \algtwo\ runs 1.5 to 2 times faster than RLA and SMKRANACC at $k/n = 0.05$ and $k/n = 0.25$, which demonstrates its practical efficiency.
	\section{Conclusion}
	This paper addresses the DrSMS problem, which generalizes the classical submodular maximization problem by incorporating diminishing returns over integer lattices. We introduced two efficient approximation algorithms designed to tackle this problem, with Algorithm \algone\ providing an approximation ratio of $0.044$ and Algorithm \algtwo\ achieving an even better ratio of $\frac{1}{4} -\epsilon$. 
	Our theoretical analysis provides strong guarantees on both algorithms' approximation quality and query complexity, demonstrating their effectiveness in various scenarios. 
	
	Furthermore, through extensive experimental evaluation, we showed that our algorithms outperform existing state-of-the-art methods, particularly in the context of Revenue Maximization applications, which leverage DR-submodular functions. The results highlight the practical value and efficiency of the proposed algorithms, making them suitable for large-scale problems where computational resources are limited.
	
	Future work could further improve these algorithms' scalability, explore additional problem formulations, and apply them to real-world applications with more complex constraints and objectives.
	
	\label{sec:conclusion}
	
	\section*{Acknowledgement}
	The first author (Tan D. Tran) was funded by the Master, PhD Scholarship Programme of Vingroup Innovation Foundation (VINIF), code VINIF.2024.TS.069. This work has been carried out partly at the Vietnam Institute for Advanced Study in
	Mathematics (VIASM). The second author (Canh V. Pham)
	would like to thank VIASM for its hospitality and financial
	support.

	%
	%

	\bibliographystyle{spmpsci}      
	\bibliography{drsub-ref}   
	
	
\end{document}